\newcounter {ctr1}
\newtheorem{mythm}[ctr1]{Theorem}
\newtheorem{mylem}[ctr1]{Lemma}
\newtheorem{myclm}[ctr1]{Claim}
\newtheorem*{hypA}{Hypothesis A (Randomized Optimization Hypothesis for $\Ppoly$)}
\newtheorem*{hypB}{Hypothesis B (Randomized Optimization Hypothesis for $\NCzero$)}
\newtheorem*{restatethmA}{Theorem~\ref{thm:amkcsp} (restated)}
\DeclareMathOperator{\Val}{Val}
\DeclareMathOperator{\poly}{poly}
\DeclareMathOperator{\Max}{Max}
\newcommand{\eps}{\varepsilon}
\newcommand{\bE}{\mathbb{E}}
\newcommand{\Polytime}{\mathsf{P}}
\newcommand{\NP}{\mathsf{NP}}
\newcommand{\coNP}{\mathsf{coNP}}
\newcommand{\AM}{\mathsf{AM}}
\newcommand{\MA}{\mathsf{MA}}
\newcommand{\PSPACE}{\mathsf{PSPACE}}
\newcommand{\Ppoly}{\mathsf{P/poly}}
\newcommand{\NCzero}{\mathsf{NC}^0}
\newcommand{\ACzero}{\mathsf{AC}^0}
\newcommand{\TCzero}{\mathsf{TC}^0}
\newcommand{\prAM}{\mathsf{prAM}}
\begin{document}

\title{A PCP Characterization of $\AM$}

\author{Andrew Drucker
\thanks{Email: adrucker@mit.edu.  Supported during part of this work by an Akamai Presidential Graduate Fellowship.}
\\MIT}

\maketitle

\begin{abstract}
We introduce a 2-round stochastic constraint-satisfaction problem, and show that its approximation version is complete for (the promise version of) the complexity class $\AM$.  This gives a `PCP characterization' of $\AM$ analogous to the PCP Theorem for $\NP$.  Similar characterizations have been given for higher levels of the Polynomial Hierarchy, and for $\PSPACE$; however, we suggest that the result for $\AM$ might be of particular significance for attempts to derandomize this class.

To test this notion, we pose some `Randomized Optimization Hypotheses' related to our stochastic CSPs that (in light of our result) would imply collapse results for $\AM$.  Unfortunately, the hypotheses appear over-strong, and we present evidence against them.  In the process we show that, if some language in $\NP$ is hard-on-average against circuits of size $2^{\Omega(n)}$, then there exist hard-on-average optimization problems of a particularly elegant form.

All our proofs use a powerful form of PCPs known as Probabilistically Checkable Proofs of Proximity, and demonstrate their versatility.  We also use known results on randomness-efficient soundness- and hardness-amplification.  In particular, we make essential use of the Impagliazzo-Wigderson generator; our analysis relies on a recent Chernoff-type theorem for expander walks. 

\end{abstract}



\section{Introduction}

\subsection{Background: PCPs and complexity classes}

A \emph{Constraint Satisfaction Problem (CSP)} is a collection $\psi(x)$ of Boolean-valued constraints over variables on a bounded-size alphabet $\Sigma$.  A CSP in which each constraint depends on at most $k$ variables is called a \emph{$k$-CSP}.  A natural computational task is to determine the maximum fraction of constraints that can be satisfied by any assignment.  Cook's Theorem~\cite{Pap} states that this problem is $\NP$-complete, and the landmark PCP Theorem of Arora et al.~\cite{ALM+} implies that, for a sufficiently small constant $\eps > 0$, it is $\NP$-hard even to output an estimate that is within $\eps$ of this maximum fraction (where in both results we may take $k = 3, \Sigma = \{0, 1\}$).

Given the importance of the PCP Theorem for complexity theory, researchers have looked for analogues of the result for complexity classes other than $\NP$.  The PCP Theorem can be seen as stating that it is $\NP$-hard to determine within $\eps$ the value of a 1-player `solitaire' game defined by a 3-CSP.  It is equally possible to study games played on a $k$-CSP in which 2 players alternate in setting values to designated blocks of variables, with one player trying to maximize the fraction of satisfied clauses and the other trying to minimize this fraction.  These games were explored in several works.  Ko and Lin~\cite{KL} showed that approximating the value of such a game is hard for the $j$-th level of the Polynomial Hierarchy, if the game lasts for $j$ moves.  In more recent work of Haviv et al.~\cite{HRT} this result was shown to hold even if each variable is allowed to appear in at most a constant number of constraints.

If the game is allowed to last polynomially many rounds, the approximation problem becomes $\PSPACE$-hard as shown by Condon et al.~\cite{CFLSa}.  The same authors showed the approximation problem for $\poly(n)$ rounds is also $\PSPACE$-hard if a maximizing player plays against a \emph{random} player~\cite{CFLSb} (where the game's value now is the \emph{expected} number of satisfied clauses under optimum play by the maximizer).  Moreover, all of the hardness-of-approximation results mentioned so far are in fact completeness results for the corresponding promise classes, so they can be viewed as giving `PCP characterizations' of $\NP, \PSPACE$, and the Polynomial Hierarchy.

One class that did not receive a PCP characterization based on CSP games was the Arthur-Merlin class $\AM$.  In fact, there are few known natural complete problems for $\AM$ (technically, for its promise version, $\prAM$; we don't know if $\AM$, a semantic class, has \emph{any} complete problems.  See Sec.~\ref{sec:promise} for the definition of $\prAM$.).  To this author's knowledge there is only one \emph{approximation} problem previously known to be $\prAM$-complete: Mossel and Umans~\cite{MU} give a $\prAM$-completeness result for approximating the VC dimension of set systems.  This striking result does not fall within the framework of CSP games given above.


\subsection{Our results}

In this paper we present a PCP characterization of $\prAM$.  We consider `stochastic' 2-CSPs $\psi(r, z)$, where $r$ is a collection of Boolean variables and $z$ a collection of variables over an alphabet $\Sigma$.  Let $\Val_{\psi}(r, z)$ be the fraction of constraints of $\psi$ satisfied by $(r, z)$.  In Section~\ref{pcpcharsec} we prove:

\begin{mythm}\label{thm:amkcsp} There is a finite alphabet $\Sigma$ and a constant $\eps > 0$, such that it is $\prAM$-complete to distinguish between the following two sets of $2$-CSPs:
$$
\Pi_{YES}=  \{ \psi: \text{ for all }r \text{ there exists }z \text{ such that } \Val_{\psi}(r, z) = 1\};  
$$
$$
\Pi_{NO}= \{ \psi:  \text{ with probability } 1 - \exp(-\Omega(|r|)) \text{ over random }r, \text{ } \Max_z [ \Val_{\psi}(r, z) ] < 1 - \eps  \}.      
$$
\end{mythm}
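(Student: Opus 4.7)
The plan is to establish both directions, membership in $\prAM$ and $\prAM$-hardness.

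For membership, I would run the natural two-round protocol: Arthur samples $r$ and forwards it to Merlin, who replies with $z$; Arthur then picks a single constraint of $\psi$ uniformly at random and accepts iff $(r,z)$ satisfies it. In the YES case Arthur accepts with probability $1$. In the NO case, for a $1-\exp(-\Omega(|r|))$ fraction of $r$'s every Merlin reply violates at least an $\eps$ fraction of constraints, so Arthur rejects with probability at least $\eps - \exp(-\Omega(|r|))$. A standard post-hoc gap amplification on the constraint-sampling step (costing $O(\log(1/\eps))$ additional coins) then delivers standard $\prAM$ parameters.

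For hardness, I would start from an arbitrary $\prAM$ language $L$ and build $\psi$ in two stages. In the first stage, put the AM protocol of $L$ into perfect-completeness form via the F\"urer--Goldreich--Mansour--Sipser--Zachos transformation, and then apply randomness-efficient soundness amplification based on expander walks (using the Chernoff bound for expander walks that the paper announces). The upshot is a polynomial-time verifier $V_x(r,z)$ with $|r|=\poly(n)$, perfect completeness, and soundness $\exp(-\Omega(|r|))$. Writing $L'_x := \{r : \exists z\ V_x(r,z)=1\}$, this gives $L'_x = \{0,1\}^{|r|}$ when $x \in L_{YES}$, and $|L'_x| \le 2^{|r|}\exp(-\Omega(|r|))$ when $x \in L_{NO}$.

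In the second stage, I would apply a PCP of Proximity (PCPP) to $V_x$, treating $r$ as the ``input'' register and the pair $(z,\pi)$ (with $\pi$ a PCPP witness) as the ``proof'', then compose with standard arity- and alphabet-reduction to obtain, uniformly in $x$, a $2$-CSP $\psi(r,w)$ over a finite alphabet $\Sigma$ such that (a) $r \in L'_x \Rightarrow \exists w:\Val_\psi(r,w)=1$, and (b) $r$ is $\delta$-far from $L'_x \Rightarrow \forall w:\Val_\psi(r,w)\le 1-\eps$, for constants $\delta,\eps>0$. Property (a) immediately handles the YES case. For the NO case, the number of $r$'s within Hamming distance $\delta|r|$ of $L'_x$ is at most $|L'_x|\cdot\binom{|r|}{\le\delta|r|} \le 2^{|r|(1-c+h(\delta))}$, where $c>0$ is the soundness exponent from the amplification step; choosing $\delta$ small enough that $h(\delta)<c$ makes this an $\exp(-\Omega(|r|))$ fraction of $\{0,1\}^{|r|}$, so property (b) applies to all but an $\exp(-\Omega(|r|))$ fraction of $r$'s, as required.

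The main obstacle I expect is the mismatch between PCPP's proximity-based soundness (which gives a gap only for inputs far from $L'_x$) and the theorem's ``almost every $r$ outside $L'_x$'' requirement. The density-versus-proximity count above closes that mismatch, but only when the AM soundness decay is fast enough to dominate the Hamming-ball entropy $h(\delta)$ while $|r|$ stays polynomial in $n$. This is precisely why naive parallel repetition (which blows up $|r|$ linearly in the repetition count) will not do, and why the randomness-efficient expander-walk amplification flagged in the paper's overview is essential here.
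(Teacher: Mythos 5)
Your proposal follows essentially the same route as the paper: (1) amplify the $\prAM$ verifier's soundness to $\exp(-\Omega(|r|))$ while keeping $|r|$ polynomial (the paper invokes the Bellare--Goldreich--Goldwasser theorem, Theorem~\ref{thm:amp}, for this; the Strong Chernoff Bound for expander walks you mention is actually used only later, for Theorem~\ref{thm:avghard2}); (2) apply a PCPP to the amplified verifier's circuit to get a $2$-CSP with proximity-based soundness; and (3) close the gap between ``$r$ far from the set of good $r$'s'' and ``$r$ outside that set'' via the Hamming-ball count $|L'_x|\cdot V_{|r|,\delta|r|} \le 2^{|r|(1-c+H(\delta))}$, choosing $\delta$ so that $H(\delta) < c$. (As a side note: your membership protocol need not sample a random constraint, which would require an extra round or an $\mathsf{AMA}$-to-$\AM$ collapse; Arthur can simply compute $\Val_\psi(r,z)$ exactly in polynomial time and accept iff it equals $1$, as the paper does.)

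The one place your sketch underestimates the work is the PCPP step, and this is exactly what the paper's Lemma~\ref{lem:searchcsp} exists for. You write ``treating $r$ as the input register and $(z,\pi)$ as the proof,'' which is the right target, but Dinur's PCPP (Theorem~\ref{thm:pcpp}) applies to a \emph{one-argument} circuit $C(x)$, with soundness proportional to $d(x, C^{-1}(1))/|x|$. If you feed it $V_x(r,w)$ with $(r,w)$ as the joint input, the resulting soundness is relative to the Hamming distance of the \emph{pair} $(r,w)$, and this is vacuous when $|w| \gg |r|$: even if $r$ is maximally far from every good $r'$, the string $(r,w)$ can still be $o(1)$-close in relative distance to an accepting pair $(r',w')$. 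The paper fixes this by building the PCPP not on $V_x$ but on an auxiliary circuit $Q(r_1,\dots,r_b,u)$ that replicates $r$ into $b \approx |u|/|r|$ identical blocks (and checks $r_1 = \cdots = r_b$), so that the $r$-portion occupies at least half the PCPP input; the copies are then identified as a single variable block. With this replication, $r$ being $\alpha$-far from good forces the whole PCPP input to be $\Omega(\alpha)$-far from $Q^{-1}(1)$ no matter what $u$ is, which is what your desired property (b) requires. Your abstraction ``$r$ as input, $(z,\pi)$ as proof'' is realized by this replication device; it does not come for free from the off-the-shelf PCPP statement.
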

In particular, this implies that $\eps/2$-approximating the value of the 2-round game associated with $\psi(r, z)$ (where the first  player plays randomly) is $\AM$-hard.    

$\AM$ is a class for which we feel such a PCP characterization might be especially important.  There is compelling evidence that $\AM = \NP$, or at least that significant derandomization of $\AM$ is possible (see~\cite{SU} for an overview of this line of research).  One approach to to try and derandomize $\AM$ is to directly attack the `easiest' $\AM$-hard problems, and a problem like the one provided by Theorem~\ref{thm:amkcsp} seems like a plausible candidate.  

How might such an attack proceed?  We make a concrete suggestion in the form of two `Randomized Oracle Hypotheses'.  In what follows $\psi(r, z)$ is a 2-CSP over $\ell$ Boolean variables ($r$) and $m$ variables ($z$) over a finite alphabet $\Sigma$.


\begin{hypA} Fix any $\delta > 0$.  For every 2-CSP $\psi(r, z)$, there exists a circuit $C_{\psi}(r): \{0, 1\}^{\ell} \rightarrow \Sigma^m$ of size $O(\poly(|\psi|))$, such that with probability at least $1/\poly(\ell)$ over a random $r \in \{0, 1\}^{\ell}$, we have 
$$
 \Val_{\psi}(r, C_{\psi}(r)) \geq \Max_z [ \Val_{\psi}(r, z)  ]  - \delta .
$$
\end{hypA}

In a nutshell, this hypothesis suggests that it is easy to approximately-optimize over $z$ for a random choice of $r$, if we allow our optimizer to depend nonuniformly on the 2-CSP $\psi$.  (Such nonuniformity is clearly necessary, in light of the PCP Theorem for $\NP$.)  This hypothesis, if true, would yield a collapse result for $\AM$.  In Section~\ref{derandsec} we prove the following claim by a straightforward application of Theorem~\ref{thm:amkcsp}:

\begin{myclm}\label{clm:hypA}  Hypothesis A implies $\AM = \MA$.
\end{myclm}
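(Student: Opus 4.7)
The plan is to use Theorem~\ref{thm:amkcsp} to reduce the task to placing its $\prAM$-complete gap problem into (promise-) $\MA$; from there $\AM = \MA$ follows by standard promise-class manipulations (any language in $\AM$ trivially satisfies the YES/NO promise, so a promise-$\MA$ protocol for the complete problem can be composed with the $\prAM$-hardness reduction to give an $\MA$ decider). The $\MA$ protocol I have in mind is the most direct one: Merlin sends the nonuniform circuit supplied by Hypothesis~A, and Arthur verifies it by random sampling.

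In more detail, apply Hypothesis~A with $\delta := \eps/2$. On input $\psi(r,z)$, Merlin sends a circuit $C : \{0,1\}^{\ell} \to \Sigma^m$ of size $\poly(|\psi|)$; an honest Merlin sends the circuit $C_\psi$ whose existence is promised. Arthur then draws $N = \poly(\ell)$ independent uniform strings $r_1,\ldots,r_N$, computes $v_i := \Val_\psi(r_i, C(r_i))$ for each $i$, and accepts iff at least a $\tau/2$ fraction of the $v_i$ satisfy $v_i \ge 1 - \eps/2$, where $\tau = 1/\poly(\ell)$ is the success probability promised by Hypothesis~A.

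For completeness, when $\psi \in \Pi_{YES}$ we have $\Max_z \Val_\psi(r,z) = 1$ for every $r$, so Hypothesis~A yields $\Pr_r[\Val_\psi(r, C_\psi(r)) \ge 1 - \eps/2] \ge \tau$; a Chernoff bound then concentrates the empirical fraction around $\tau$. For soundness, fix \emph{any} circuit $C$ that Merlin might send. When $\psi \in \Pi_{NO}$, with probability at least $1 - \exp(-\Omega(\ell))$ over $r$ we have $\Max_z \Val_\psi(r,z) < 1 - \eps < 1 - \eps/2$, so no output $C(r)$ whatsoever can achieve $\Val_\psi(r, C(r)) \ge 1 - \eps/2$; hence the probability that any single sample is ``accepted'' is at most $\exp(-\Omega(\ell))$, and Chernoff again gives concentration.

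The only step requiring real care is the quantitative gap between the completeness rate $\tau = 1/\poly(\ell)$ and the soundness rate $\exp(-\Omega(\ell))$. This gap is comfortable whenever $\ell$ is sufficiently large; for small~$\ell$ one can pad the CSP with dummy random variables (which leaves $\Val_\psi$ unchanged) or handle the bounded-size instances directly. Beyond this bookkeeping, the argument is a clean plug-in of Hypothesis~A into the verifier side of the protocol: Merlin supplies exactly the nonuniform approximate optimizer that the hypothesis postulates, and Arthur tests it against the CSP by random evaluation. I expect the size-parameter accounting to be the only technical subtlety; the core logic is otherwise immediate from Theorem~\ref{thm:amkcsp} and Hypothesis~A.
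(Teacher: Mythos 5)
Your proposal is correct and follows essentially the same route as the paper's proof: Merlin sends the nonuniform approximate optimizer $C_\psi$ promised by Hypothesis~A, Arthur samples polynomially many random $r$ and evaluates $\Val_\psi(r, C(r))$, and the gap between the $1/\poly(\ell)$ completeness rate and the $\exp(-\Omega(\ell))$ soundness rate does the rest. The only differences are cosmetic: the paper sets $\delta := \eps$ and accepts if \emph{any} sampled $r$ achieves $\Val_\psi(r,C(r)) \geq 1-\eps$, whereas you set $\delta := \eps/2$ and threshold on the empirical fraction exceeding $\tau/2$ — both tests separate the two cases, and the small-$\ell$ concern you flag is already handled because the reduction in Theorem~\ref{thm:amkcsp} produces instances with $\ell \geq n$.
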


A strengthened hypothesis could have the stronger implication that $\AM = \NP$.  Consider the following:

\begin{hypB}  For any $\delta > 0$, there is an integer $t = t(\delta) > 0$ such that the following holds.  For every 2-CSP $\psi(r, z)$, there exists a function $F_{\psi}(r): \{0, 1\}^{\ell} \rightarrow \Sigma^m$, where each output coordinate of $F_{\psi}$ depends on at most $t$ bits of $r$, and such that with probability at least $1 - \delta$ over $r$,
$$
\Val_{\psi}(r, F_{\psi}(r)) \geq \Max_z [ \Val_{\psi}(r, z)  ]  - \delta .
$$
\end{hypB}

In Section~\ref{derandsec} we prove:

\begin{myclm}\label{clm:hypB}  Hypothesis B implies $\AM = \NP$.
\end{myclm}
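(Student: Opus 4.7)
The plan is to reduce any $\prAM$ language to the 2-CSP gap problem of Theorem~\ref{thm:amkcsp} and then give an $\NP$ verifier that exploits the $\NCzero$-locality of the witness function promised by Hypothesis B. Fix the gap constant $\eps$ from Theorem~\ref{thm:amkcsp}, set $\delta := \eps/4$, and let $t := t(\delta)$ be the locality constant from Hypothesis B. On input $\psi(r, z)$, Merlin sends a description of a function $F_\psi \colon \{0,1\}^{\ell} \to \Sigma^m$ where each output coordinate is specified by its (at most $t$) input positions and its truth table. Because $t$ depends only on $\delta$ (hence only on $\eps$), the whole description has size $m \cdot (t \log \ell + 2^t \log |\Sigma|) = \poly(|\psi|)$ bits.

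The crucial observation is that Arthur can verify Merlin's witness \emph{deterministically} in polynomial time. Write
\[
A(F_\psi) \ := \ \bE_r\bigl[\Val_\psi(r, F_\psi(r))\bigr] \ = \ \frac{1}{|\psi|} \sum_{C} \Pr_r\bigl[C(r, F_\psi(r)) = 1\bigr].
\]
Every constraint $C$ of a 2-CSP has arity at most $2$, and each $z$-variable is a function of at most $t$ bits of $r$ under $F_\psi$, so $C(r, F_\psi(r))$ depends on at most $2t = O(1)$ bits of $r$. The probability in each summand can therefore be computed exactly by brute-force enumeration over those $O(1)$ bits, so $A(F_\psi)$ is computed exactly in time $O(|\psi|)$. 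Arthur accepts iff $A(F_\psi) \geq 1 - 5\eps/8$.

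For completeness, if $\psi \in \Pi_{YES}$ then $\Max_z \Val_\psi(r, z) = 1$ for all $r$, so Hypothesis B yields an $F_\psi$ with $\Val_\psi(r, F_\psi(r)) \geq 1 - \delta$ on a set of $r$ of measure at least $1 - \delta$; hence $A(F_\psi) \geq (1-\delta)^2 \geq 1 - 2\delta = 1 - \eps/2$. For soundness, if $\psi \in \Pi_{NO}$ then for \emph{every} function $F \colon \{0,1\}^\ell \to \Sigma^m$ we have $\Val_\psi(r, F(r)) \leq \Max_z \Val_\psi(r, z) < 1 - \eps$ except on a set of $r$ of measure $\exp(-\Omega(\ell))$, so $A(F) < (1 - \eps) + \exp(-\Omega(\ell)) < 1 - 5\eps/8$ once $\ell$ is sufficiently large; small instances can be handled by padding or brute force. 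The gap between the two cases shows $\prAM \subseteq \prNP$, and since $\NP \subseteq \AM$ is standard, this gives $\AM = \NP$.

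The proof is structurally short; there is no single hard obstacle, but the main conceptual point to get right is that $\NCzero$-locality of $F_\psi$ turns $\Val_\psi(r, F_\psi(r))$ into an average of $O(1)$-juntas of $r$, whose mean is polynomial-time computable. This is exactly what replaces Arthur's randomness and collapses the protocol to $\NP$; the use of Theorem~\ref{thm:amkcsp} as the reduction target, with its exponentially-small soundness error, is what allows any bounded-range distribution for $\Val_\psi(r, F(r))$ in the NO case to stay safely below the acceptance threshold regardless of Merlin's choice of $F$.
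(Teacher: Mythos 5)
Your proof is correct and follows essentially the same route as the paper: run the reduction of Theorem~\ref{thm:amkcsp}, have Merlin send the $t$-local function $F_\psi$ promised by Hypothesis~B, let Arthur compute $\bE_r[\Val_\psi(r,F_\psi(r))]$ exactly in polynomial time by exploiting that each constraint becomes an $O(1)$-junta of $r$, and threshold to separate the two cases. The only differences are cosmetic constant choices ($\delta = \eps/4$ versus the paper's $\eps/3$) and a slightly more explicit accounting of how the expectation is evaluated.
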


Given the potential consequences of these hypotheses, what chance do they have of being true?  Unfortunately, it seems that each is unlikely.  In Section~\ref{nphardsec} we prove two results, each to the effect that, if $\NP$ decision problems are hard on average for exponential-size circuits, then both hypotheses fail in a strong way.  We state these results next.  A language $L$ is called \emph{$p(n)$-hard for size $s(n)$} if for every circuit $C$ of size $s(n)$, $\Pr_{x \in \{0, 1\}^n}[C(x) = L(x)] \leq p(n)$.

\begin{mythm}\label{thm:avghard1}  Suppose there exists a $\gamma_1 > 0$ and an $L \in \NP \cap \coNP$ that is $(1 - 1/\poly(n))$-hard for size $2^{\gamma_1 n}$.
Then there exists $c, \gamma_2, \theta > 0$ and a polynomial-time constructible family $\{\psi_n(r, w)\}_{n > 0}$ of 2-CSPs (with $|r| = cn, |w| = d(n) = O(\poly(n))$), such that:

(\ref{thm:avghard1}.i) for all $r$, there exists a $w$ such that $\Val_{\psi_n}(r, w) = 1$;

(\ref{thm:avghard1}.ii) for all $n$, if $C: \{0, 1\}^{cn} \rightarrow \{0, 1\}^{d(n)}$ is a circuit of size at most $2^{\gamma_2 n}$, then 
$$
\Pr_{r} [  \Val_{\psi_n}(r, C(r))  >  1  -   \theta  ]   \leq \exp \{-\Omega(n)\}. 
$$
\end{mythm}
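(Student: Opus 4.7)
My plan is to prove Theorem~\ref{thm:avghard1} in two stages: first, amplify the mild average-case hardness of $L$ into an exponentially hard \emph{total} search problem on a linear-length seed via a derandomized direct product construction, and second, encode that search problem as a 2-CSP via a PCP of Proximity (PCPP) so that the totality and hardness both transfer cleanly.

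For stage one, I use $L \in \NP \cap \coNP$ precisely to obtain totality: fix a polynomial-time predicate $V(x, y, b)$ such that for every $x$ there exists $(y, b)$ with $V(x, y, b) = 1$ and moreover any such $b$ must equal $L(x)$. The associated search problem---given $x$, output any witness pair $(y, b)$---is at least as hard on average as deciding $L$, hence $(1 - 1/\poly(n))$-hard against circuits of size $2^{\gamma_1 n}$. I then apply a derandomized direct product in the spirit of the Impagliazzo-Wigderson generator: from a seed $r \in \{0, 1\}^{cn}$, generate $k = \Theta(n)$ pseudorandom instances $x_1(r), \ldots, x_k(r)$ as the vertices of an expander walk. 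Combining per-instance hardness with a Chernoff-type concentration bound for expander walks (the Gillman-type result mentioned in the abstract), any circuit of size $2^{\gamma_2 n}$ (with $\gamma_2 < \gamma_1$ chosen appropriately) that, given $r$, outputs candidate witnesses correct on a $(1 - \theta')$-fraction of the $k$ coordinates can succeed with probability only $\exp(-\Omega(n))$ over $r$. Crucially, by totality a valid witness tuple exists for every $r$.

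For stage two, let $\cW(r)$ denote a canonical encoding of any valid witness tuple for $(x_1(r), \ldots, x_k(r))$, and consider the pair language
$$L' = \bigl\{(r, \cW) : \cW \text{ encodes witnesses valid on every coordinate derived from } r\bigr\}.$$
Since $L'$ is polynomial-time checkable once $(r, \cW)$ is given, and since every $r$ has some $(r, \cW) \in L'$, I apply a polynomial-size, constant-query PCPP / assignment-tester to $L'$ with $r$ as the explicit part, bundling $\cW$ together with the PCPP proof into a single assignment $w$ over a fixed finite alphabet $\Sigma$. This yields a polynomial-time constructible 2-CSP $\psi_n(r, w)$ that satisfies (\ref{thm:avghard1}.i) by PCPP completeness, and for which PCPP soundness gives a constant $\theta > 0$ and a polynomial-time decoder $\cD$ such that whenever $\Val_{\psi_n}(r, w) > 1 - \theta$, $\cD(r, w)$ outputs witnesses correct on a $(1 - \theta')$-fraction of the coordinates of $r$.

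Combining the two stages proves (\ref{thm:avghard1}.ii): given a hypothetical size-$2^{\gamma_2 n}$ circuit $C$ violating it, the composition $r \mapsto \cD(r, C(r))$ is a circuit of essentially the same size succeeding at the amplified search task with probability exceeding $\exp(-\Omega(n))$, contradicting stage one. I expect the main obstacle to lie in stage one: adapting the standard decision-problem direct product / XOR-style amplifications to this \emph{total search} setting, while keeping the seed length linear in $n$ and driving the error down to $\exp(-\Omega(n))$, is the technically delicate step, and is precisely where the expander-walk Chernoff bound is needed to fuel the concentration argument in this short-seed regime.
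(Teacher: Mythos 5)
Your high-level decomposition (hardness amplification into an exponentially hard search problem, then a PCPP-based reduction to a 2-CSP) matches the paper's, but there are two substantive problems with how you fill in the stages, one of which is a genuine gap.

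The real gap is in stage two. You propose to ``apply a PCPP to $L'$ with $r$ as the explicit part, bundling $\cW$ together with the PCPP proof into a single assignment $w$,'' and then assert that PCPP soundness gives a decoder $\cD$ such that $\Val_{\psi_n}(r,w) > 1-\theta$ implies $\cD(r,w)$ recovers mostly-correct witnesses. But PCPP soundness says nothing about the proof part of the assignment; it only certifies that the \emph{explicit input} $r$ is close to a satisfiable instance. If $\cW$ sits in the proof side, a high CSP value tells you only that \emph{some} valid $\cW'$ exists for some $r'$ near $r$, not that the $\cW$ actually supplied (or anything decodable from $w$) is close to correct. To extract a witness from a near-optimal assignment you need a different construction, which is exactly what the paper's Lemma~\ref{lem:searchcsp} provides: the witness is first encoded by a good error-correcting code $E$, the codeword $u=E(\cW)$ is placed in the \emph{explicit} input of the PCPP (alongside $b$ repeated copies of $r$ to keep the $r$-mass dominant), and then soundness forces $u$ to be close to a true codeword, so the unique-decoder recovers an exact witness for some $r'$ near $r$. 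A Hamming-ball sampling trick (add a random low-weight vector to $r$ before querying) then converts ``near $r$'' into ``equal to $r$'' at the cost of a $2^{-\eps\ell}$ factor. Without the coding and the ball-sampling your decoder $\cD$ does not exist, and the reduction to the search lower bound falls apart.

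The second issue is less a gap than a misplaced tool: for this theorem the expander-walk Chernoff bound is not needed, and invoking it signals that the argument is going down a harder path than necessary. Because $L \in \NP\cap\coNP$, one can put a ``claim bit'' at the front of each witness and define $Q(r,w_1,\ldots,w_n)=1$ iff every $w_i$ certifies $L'(G_{IW,i}(r))$; then a circuit solving the search problem for $Q$ yields, by just reading off the $n$ claim bits, a circuit computing $(L')^n\circ G_{IW}$ exactly, and the $2^{-\gamma'n}$-hardness of that function (Theorem~\ref{thm:iwmain}) is all you need. No concentration of $\sharp(r)$ around $c_n\cdot n$ is invoked; that step is needed only in Theorem~\ref{thm:avghard2}, where $L$ is merely in $\NP$ and one must account for indices with $L(G_{IW,i}(r))=0$. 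Relatedly, you should make the first amplification step explicit: $(1-1/\poly(n))$-hardness must first be boosted to constant hardness (Impagliazzo's hard-core lemma, Lemma~\ref{lem:mild2}, which keeps $L'$ in $\NP\cap\coNP$) before the Impagliazzo--Wigderson generator applies; the generator's guarantee is stated for a $2/3$-hard language, not a $(1-1/\poly(n))$-hard one.
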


\vspace{.5 em}

\begin{mythm}\label{thm:avghard2}  There is an $\eps_0 > 0$ such that the following holds.  Suppose there exists a $\gamma_1 > 0$ and an $L \in \NP$ that is $(1/2 + \eps_0)$-hard for size $2^{\gamma_1 n}$.
Then there exists a $c > 0$, a polynomial-time constructible family $\{\psi_n(r, w)\}_{n > 0}$ of 2-CSPs (with $|r| = cn, |w| = d(n) = O(\poly(n))$), and $\gamma_2, \theta > 0$, such that: 

(\ref{thm:avghard2}.i) With probability $\geq 1 - \exp\{-\Omega(n)\}$ over $r$, there exists $w$ with $\Val_{\psi_n}(r, w) = 1$;

\vspace{.5 em}

(\ref{thm:avghard2}.ii) If $C: \{0, 1\}^{c(n)} \rightarrow \{0, 1\}^{d(n)}$ is any circuit of size at most $2^{\gamma_2 n}$, then 
$$
\Pr_{r} [  \Val_{\psi_n}(r, C(r))  >  1  -   \theta  ]   \leq \exp \{-\Omega(n)\}.
$$
\end{mythm}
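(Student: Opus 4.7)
The plan is to reduce Theorem~\ref{thm:avghard2} to the setup of Theorem~\ref{thm:avghard1} by interposing a hardness-amplification step that bridges the gap between a single mildly hard $\NP$ language and the strongly hard $\NP \cap \coNP$ language available there. This is where the Impagliazzo--Wigderson paradigm and the Chernoff-type bound for expander walks advertised in the abstract come in. I would fix an explicit constant-degree expander on $\{0,1\}^n$, and let a seed $r \in \{0,1\}^{cn}$ generate a pseudorandom walk $x_1(r),\ldots,x_k(r)$ of length $k = \Theta(n)$. Write $b(r) = (L(x_1),\ldots,L(x_k)) \in \{0,1\}^k$ and $\mu = \Pr_x[L(x) = 1]$; since constant functions are always available to a small circuit, the hardness hypothesis forces $\mu \in [1/2 - \eps_0, 1/2 + \eps_0]$.

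Two facts would drive the rest of the proof. \emph{Concentration:} by the Chernoff bound for expander walks, the Hamming weight $|b(r)|/k$ lies in $[\mu - o(1), \mu + o(1)]$ except with probability $\exp(-\Omega(n))$ over $r$. \emph{Hardness of the list $b(\cdot)$:} via the standard ``reverse'' direction of the Impagliazzo--Wigderson analysis---a small circuit predicting $b(r)$ correctly on a non-negligible fraction of seeds would, through hybrid arguments and majority-of-advice tricks, yield a size-$2^{\gamma_1 n}$ circuit computing $L(x)$ on random $x$ with probability exceeding $1/2 + \eps_0$, contradicting the hypothesis---no circuit of size $2^{\gamma_2 n}$ outputs $b(r)$ correctly on more than an $\exp(-\Omega(n))$ fraction of seeds, for a suitable $\gamma_2 < \gamma_1$.

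With these two facts in hand, I would apply the PCPP template of Theorem~\ref{thm:avghard1} to the $\NP$ search relation ``on input $r$, find $(y,\pi)$ where $y \in \{0,1\}^k$ has Hamming weight exactly $t = \lfloor \mu k \rfloor$, and for every $i$ with $y_i = 1$, $\pi_i$ is a valid $\NP$-witness of $L(x_i(r)) = 1$.'' The PCPP verifier becomes the 2-CSP $\psi_n(r,w)$, with $w$ packaging $(y,\pi)$ together with PCPP auxiliary variables. Property (\ref{thm:avghard2}.i) is immediate: on the $1 - \exp(-\Omega(n))$ fraction of $r$ where $|b(r)| = t$, the honest assignment $(b(r),\text{true witnesses})$ satisfies every constraint. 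Property (\ref{thm:avghard2}.ii) follows from PCPP soundness: a circuit $C$ of size $2^{\gamma_2 n}$ with $\Val_{\psi_n}(r, C(r)) > 1 - \theta$ on a non-negligible fraction of $r$ decodes, via the PCPP local decoder, to a $(y,\pi)$ satisfying the search relation; the per-coordinate witness checks force $y_i = 1 \Rightarrow b_i(r) = 1$, while $|y| = t = |b(r)|$ then forces $y = b(r)$. This turns $C$ into a size-$2^{\gamma_2 n} \cdot \poly(n)$ circuit outputting $b(r)$ on non-negligibly many seeds, contradicting the hardness from the previous paragraph.

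The main obstacle is calibrating the PCPP for the ``prescribed Hamming weight'' predicate so that its soundness slack is strictly smaller than the concentration window of $|b(r)|$: we need ``$y \le b(r)$ coordinatewise'' together with ``$|y| = t$'' to truly force $y = b(r)$, rather than merely agreement on a $1 - o(1)$ fraction of coordinates (which would leave a residual error that a cheating $C$ could exploit). A secondary, bookkeeping-style obstacle is that $t = \lfloor \mu(n) k \rfloor$ depends on $\mu(n)$, which is not polynomial-time computable from $L$; the cleanest fix is to preprocess $L$ by XORing with an efficiently computable balanced bit (e.g.\ the parity of the input), obtaining a language of density exactly $1/2$ that remains hard, so that $\mu = 1/2$ can be baked into the construction. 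Once these are handled, the proof assembles from the hardness amplification above and the PCPP-to-CSP reduction already developed for Theorem~\ref{thm:avghard1}.
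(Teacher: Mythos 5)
Your high-level plan matches the paper: apply Impagliazzo--Wigderson to $L$, use the Chernoff bound for expander walks to control the number of $1$-outputs along a pseudorandom walk, build an $\NP$-search predicate $Q(r,\cdot)$ from $L$-witnesses, and feed $Q$ through the PCPP-to-CSP reduction of Lemma~\ref{lem:searchcsp}. But the heart of the proposal---the ``Hamming weight exactly $t$'' predicate---does not work, and the fix is precisely the step you flag as an obstacle but do not resolve.

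There are two related problems with requiring $|y| = t = \lfloor \mu k\rfloor$ exactly. For completeness, the expander Chernoff bound (Lemma~\ref{lem:genconc}) only gives $|b(r)| \in [(\mu-\delta)k,\ (\mu+\delta)k]$ with overwhelming probability; it never gives $|b(r)| = t$ except on a $\Theta(1/\sqrt{n})$ fraction of seeds. So the honest assignment $y = b(r)$ fails to satisfy your weight-exactly-$t$ constraint for most $r$, and condition~(\ref{thm:avghard2}.i) breaks. (Your balanced-$L$ preprocessing to force $\mu = 1/2$ does not help: $|b(r)|$ still fluctuates by $\Theta(\sqrt{n})$ around $k/2$.) For soundness, even granting a fixed weight-$t$ predicate, decoding the PCPP proof gives you some $(y,\pi)$ with $y \le b(r)$ coordinatewise and $|y| = t$; since $|b(r)|$ is generally strictly greater than $t$, this does \emph{not} force $y = b(r)$, so the contradiction with the hardness of $L^n\circ G_{IW}$ does not close.

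The paper avoids both problems by (a) using a one-sided threshold (``at least an $\eta$ fraction of indices have valid witnesses''), which the honest prover clears for all but an $\exp(-\Omega(n))$ fraction of $r$ by concentration, and crucially (b) proving a separate combinatorial amplification step (the inner Claim~\ref{subclm}): if a small circuit $C$ produces, on a non-negligible fraction of seeds, witnesses for \emph{most} indices $i$ with $L(G_{IW,i}(r)) = 1$, then guessing a uniformly random subset $J$ of size $< \alpha n$ and padding $C$'s index set by $J$ recovers the \emph{exact} string $L^n\circ G_{IW}(r)$ with probability at least $2^{-(\alpha + H(\alpha))n}$ over $(r,J)$; choosing $\alpha$ small enough makes this contradict Theorem~\ref{thm:iwmain}. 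That ``pad by a random small set and pay only $2^{-H(\alpha)n}$'' step is the missing idea in your argument: it is what converts the unavoidable slack between $\sharp(r)$ and $\sharp_C(r)$ into a contradiction with an \emph{exact}-computation hardness result. You would need to add this (or an equivalent) before the soundness half of your argument is sound.
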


We note that the hypothesis in Theorem~\ref{thm:avghard2} is implied by the hypothesis that there exists a \emph{balanced} function $L \in \NP$ that is $(1 - 1/\poly(n))$-hard for some size $s(n) = 2^{\Omega(n)}$; this follows from a result of O'Donnell~\cite{OD} (see also Healy et al.~\cite{HVV}, where the needed form of O'Donnell's result is made explicit and proved in a stronger form).
 
Theorems~\ref{thm:avghard1} and~\ref{thm:avghard2} both say that if $\NP$ (or $\NP \cap \coNP$) has sufficiently hard problems, then this hardness can be `concentrated' into a kind of `inapproximability-on-average' result for an optimization problem associated with a single, uniform family of (stochastic) CSPs.  Note that the two results offer a tradeoff: Theorem~\ref{thm:avghard2} gives a slightly weaker conclusion from a presumably likelier hardness assumption.  The assumptions in the above results are strong but, we feel, plausible.  But at the very least, these results suggest that the approach we suggested to showing new upper-bounds on the power of $\AM$ must be modified to have a reasonable chance of succeeding.  

We feel, however, that the Random Optimization Hypotheses are worthy of study in their own right, even if they turn out to be false; we pose some concrete questions about them at the end of the paper.  We also feel that Theorems~\ref{thm:avghard1} and~\ref{thm:avghard2} are interesting for the study of average-case hardness in $\NP$, and that the CSP families they produce might have further applications in complexity theory.

\subsection{Our methods}
All of our three main results--Theorems~\ref{thm:amkcsp}, \ref{thm:avghard1}, and~\ref{thm:avghard2}--are essentially hardness results for computational tasks associated with 2-CSPs.  In each case the reduction with which we prove our result uses a powerful type of PCP known as Probabilistically Checkable Proofs of Proximity (PCPPs). PCPPs were introduced independently by Ben-Sasson et al.~\cite{BGH+} and by Dinur and Reingold~\cite{DR}, and the PCPPs we use were developed by Dinur~\cite{Din} (in~\cite{DR, Din} PCPPs are referred to as `Assignment Testers').  In Section~\ref{sec:augpcpp}, we derive a variant form of PCPPs (Lemma~\ref{lem:searchcsp}) that is more useful for our purposes.

Lemma~\ref{lem:searchcsp} gives a general reduction (similar to past uses of PCPPs, e.g., in~\cite{Din}) in which we start with a two-argument circuit $Q(r, w)$ and efficiently produce a 2-CSP $\psi(r, z)$.  The basic hope for our reduction is as follows: first, for any $r$, if the restricted circuit $Q(r, \cdot)$ is satisfiable (i.e., there exists $w$ such that $Q(r, w) = 1$), then the restricted 2-CSP $\psi(r, \cdot)$ should be satisfiable as well.  Second, if $Q(r, \cdot)$ is unsatisfiable, then any assignment to $\psi(r, \cdot)$ should violate an $\Omega(1)$-fraction of the constraints in $\psi$.  Unfortunately, this second requirement is too strong and cannot be met.  What we \emph{can} guarantee is that if $r$ is `far' in Hamming distance from any $r'$ for which $Q(r', \cdot)$ is satisfiable, then for any $z$, $(r, z)$ violates an $\Omega(1)$-fraction of constraints of $\psi$.

How does this reduction help us prove the $\prAM$-hardness result in Theorem~\ref{thm:amkcsp}?  Any instance $x$ of a promise problem $\Pi = (\Pi_{YES}, \Pi_{NO})$ defines a predicate $Q(r, w)$ computed by a poly-size circuit.  If $x \in \Pi_{YES}$ then for all $r$, $Q(r, \cdot)$ is satisfiable; while if $x \in \Pi_{NO}$ then for a 2/3 fraction of $r$, $Q(r, \cdot)$ is unsatisfiable.  In order to apply our reduction, we need a stronger condition in the second case: a random choice of $r$ should be far from any $r'$ for which $Q(r', \cdot)$ is satisfiable.  In other words, we need an extremely low error probability in our underlying Arthur-Merlin protocol.  This cannot be achieved by straightforward parallel repetition, but it is provided by a theorem of Bellare et al.~\cite{BGG} which gives a randomness-efficient soundness-amplification for $\AM$.  Interestingly, Mossel and Umans~\cite{MU} also used such amplification for their $\AM$-hardness-of-approximation result on VC dimension, but for rather different reasons (unrelated to PCPs).

Our $\prAM$-hardness proof is, we feel, more straightforward than the existing proofs of the analogous results for $\PSPACE$ and the Polynomial Hierarchy, modulo our use of sophisticated tools (PCPPs and efficient soundness-amplification) which we apply in a `black-box' fashion.  Of course we do not rule out that our result could be also proved more directly by adapting ideas from the earlier papers (which use some of the same property-testing ideas that have gone into constructions of PCPPs).  But we feel that PCPPs in particular, which have already found applications in PCP construction, coding theory, and property testing (see~\cite{BHLM} for an overview), are a versatile tool which could be of more widespread use in complexity theory.  In a very recent example of their utility, Williams~\cite{Wil10} applied PCPPs to the study of circuit lower bounds.

Next we discuss our methods in Theorems~\ref{thm:avghard1} and~\ref{thm:avghard2}.  Our transformation in Lemma~\ref{lem:searchcsp} from the circuit $Q$ to the 2-CSP $\psi$ has a further useful property: we can reduce the problem of \emph{finding} satisfying assignments to $Q(r, \cdot)$, to the problem of finding nearly-optimal assignments to $\psi(r, \cdot)$.  Roughly speaking, we show the following.  Suppose there is an algorithm $P(r)$ producing an assignment $z$, such that with some probability $p$ over $r$, $(r, P(r))$ satisfies `almost all' of the constraints of $\psi$; then there is a second algorithm $\tilde{P}(r)$ such that $Q(r, \tilde{P}(r)) = 1$ with probability $p' \geq 2^{-\eps |r|}p$ (where $\eps > 0$ can be chosen arbitrarily small).  This property of the reduction is somewhat more novel, although the techniques we use (involving error-correcting codes) still follow previous works.

To apply our reduction, we use the hardness assumptions in Theorems~\ref{thm:avghard1} and~\ref{thm:avghard2} to produce predicates $Q(r, w)$ such that $Q(r, \cdot)$ is satisfiable with high probability, while any `small' witness-producing circuit $C$ fails to solve the search problem associated with $Q$: that is, $Q(r, C(r)) = 0$ with high probability.  Because of the exponential loss factor $2^{-\eps |r|}$ in our reduction, we need the search problem associated with $Q$ to be \emph{extremely} hard: we need every `small' circuit $C$ to succeed with probability at most $\exp\{-\Omega(|r|)\}$ over $r$ in achieving $Q(r, C(r)) = 1$.  

To produce such extremely hard search problems from a more `mild' hardness assumption, we use existing hardness-amplification techniques.   In particular, we use the well-known Impagliazzo-Wigderson generator~\cite{IW}.  This generator, on input parameter $n$, takes a seed $r$ of length $O(n)$, and produces $n$ `pseudorandom' outputs $g_1, \ldots, g_n$ each of length $n$.  The generator has the property that if language $L$ is mildly hard for sufficiently small (but exponential-size) circuits, then any sufficiently smaller circuit has success probability $ \leq \exp\{-\Omega(n)\}$ in correctly guessing the $n$-bit string $(L(g_1), \ldots, L(g_n))$.
Then, if our hard language $L$ is in $\NP \cap \coNP$ (as in Theorem~\ref{thm:avghard1}), defining our predicate $Q$ is straightforward: we let $Q(r, w) = 1$ iff $w$ contains `proofs' for the $n$ values $(L(g_1), \ldots, L(g_n))$.  

If our hard language is merely in $\NP$ (as in Theorem~\ref{thm:avghard2}), we need to work harder.  In this case, we let $Q(r, w) = 1$ iff $w$ contains proofs that $L(g_i) = 1$, for a `sufficient number' of the strings $g_i$.  The idea is that if a small circuit $C(r)$ could with some noticeable probability guess such proofs for `almost all' the indices $i$ for which $L(g_i) = 1$, then $C$ could be modified to correctly guess $(L(g_1), \ldots, L(g_n))$ with noticeable probability, contrary to the properties of the generator.  Making this idea work involves showing that the set size $|\{ i \in [n]: L(g_i) = 1\}|$ is highly concentrated around its expectation.  For this we rely on a recently proved concentration result called the Strong Chernoff Bound for Expander Walks~\cite{WX05, WX08, Hea}.  This result is perfectly suited to analyze the Impagliazzo-Wigderson generator (which is partly defined in terms of walks on expander graphs).

The precise form of our assumptions in Theorems~\ref{thm:avghard1} and~\ref{thm:avghard2} are dictated by the hardness-amplification tools currently available.  In particular, sufficiently strong hardness-amplification is only available if we make a hardness assumption against nonuniform, exponential-sized circuits.  We believe versions of Theorems~\ref{thm:avghard1} and~\ref{thm:avghard2} should be possible for a \emph{uniform} hardness assumption; recently Impagliazzo et al.~\cite{IJKW} made partial progress towards the hardness-amplification tools needed.

\section{Preliminaries}\label{prelimsec}

\subsection{Basic definitions}\label{basicsec}

We presume familiarity with basic notions in complexity theory, in particular familiarity with the classes $\Polytime, \NP$, and $\AM$.  We define promise classes and the promise class $\prAM$ in Section~\ref{sec:promise}.

For a language $L \subseteq \{0, 1\}^*$, we use $L(x)$ to denote the characteristic function of $L$.
We use $|x|$ to denote the length of a string $x$ over some (possibly non-Boolean) alphabet $\Sigma$.  $d(x, y)$ denotes the Hamming distance between strings $x, y \in \Sigma^n$, and $d(x, S)$ is the generalized Hamming distance between $x \in \Sigma^n$ and a set $S \subseteq \Sigma^n$.  If $d(x, S) \leq c$ we say $x$ is \emph{$c$-close to $S$}, otherwise $x$ is \emph{$c$-far from $S$}.  Similarly, for $\alpha \in [0, 1]$, if $d(x, S) \leq \alpha n$ we say $x$ is \emph{$ \alpha$-close in relative distance to $S$}, otherwise $x$ is \emph{$ \alpha$-far in relative distance from $S$}.

$H(t): [0, 1] \rightarrow [0, 1]$ denotes the binary entropy function, $H(t) = -t \log t - (1 - t)\log (1 - t)$ for $t \in (0, 1)$ and $H(0) = H(1) = 0$.  We let $V_{n, k}$ denote the discrete volume of the Hamming sphere of radius $k$ in $\{0, 1\}^n$; that is, 
$$
V_{n, k} := \sum_{0 \leq i \leq k} {n \choose i},
$$ 
and we use the known bound $V_{n, \alpha n} \leq 2^{H(\alpha)n}$ (valid for $\alpha \in [0, 1/2]$).

When we speak of circuits, unless otherwise mentioned we mean deterministic Boolean circuits of fanin-two, and we measure circuit size (denoted $|C|$ for circuit $C$) as the number of gates.  For functions $p(n) \in [0, 1]$, $s(n) \geq 0$ we say that a language $L$ is \emph{$p(n)$-hard for size $s(n)$} if for every Boolean circuit $C$ of size $\leq s(n)$, $\Pr_{x \in \{0, 1\}^n}[C(x) = L(x)] \leq p(n)$.  We extend this definition to general functions: we say that a function $F: \{0, 1\}^n \rightarrow \{0, 1\}^m$ is $p(n)$-hard for size $s(n)$ if for every $m$-output Boolean circuit $C$ of size $\leq s(n)$, $\Pr_{x \in \{0, 1\}^n}[C(x) = F(x)] \leq p(n)$.

\subsection{CSPs, PCPPs, and codes}

Fix an integer $k \geq 1$.  A \emph{$k$-local Constraint Satisfaction Problem}, or \emph{$k$-CSP}, over finite alphabet $\Sigma$ is a collection $\psi(x) = \psi_1(x), \ldots \psi_m(x)$ of Boolean-valued functions on the input $x = (x_1, \ldots x_n) \in \Sigma^n$, where each $\psi_j$ depends only on some $k$ variables of $x$ and is specified by a $k$-tuple $I_j \subseteq [n]$ and a truth-table on these $k$ variables.  Define $\Val_{\psi}(x)$, the \emph{value of $\psi$ on $x$}, as the fraction of constraints $\psi_j$ that are satisfied by $x$ (i.e. such that $\psi_j(x) = 1$).

Next we define PCPPs.  Fix a circuit $C(x)$ on $n$ Boolean input variables, a finite alphabet $\Sigma$, and a parameter $\beta > 0$.  We say that a $k$-CSP $\psi$ is a \emph{PCPP for $C$ over $\Sigma$ with security $\beta$} if:

\begin{enumerate}
\item $\psi$ is defined on variable set $(x, z)$, where $x$ are the Boolean input variables to $C$ and $z$ are auxiliary `proof' variables taking values in $\Sigma$;
\item For any $x \in \{0, 1\}^n$, if $C(x) = 1$ then there exists a setting of $z$ such that $Val_{\psi}(x, z) = 1$;
\item For all $x  \in \{0, 1\}^n$ and $z$, $\Val_{\psi}(x, z) \leq 1 - \beta\cdot \frac{d(x, C^{-1}(1))}{n}$. 
\end{enumerate}
The \emph{proof size} of $\psi$ is the number of variables in $z$.

The following positive result on PCPPs is due to Dinur.  

\begin{mythm}\label{thm:pcpp}~\cite[Cor. 9.3]{Din}
There is a constant-size alphabet $\Sigma_0$, a constant $\beta > 0$, and a polynomial-time algorithm that, given a circuit $Q(x)$ of size $t$, produces a 2-CSP $\psi_Q(x, z)$ that is a PCPP for $Q$ over $\Sigma_0$ with security $\beta$.  Moreover, the proof size of $\psi$ is $O(\poly(t))$.
\end{mythm}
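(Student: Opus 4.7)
The plan is to follow Dinur's combinatorial construction of the PCP theorem, adapted so that every step preserves a proximity property on the input variables $x$. I would first reduce the circuit $Q(x)$ to a basic 3-CSP $\psi_0(x, w)$ in the standard Tseitin way: introduce auxiliary variables $w$ for the output of each internal gate, add one constraint per gate checking its local computation, and add a final constraint asserting that the top wire of $Q$ equals $1$. This is a polynomial-time reduction of size $O(t)$ with perfect completeness, soundness $1 - 1/|\psi_0|$, and—critically—the input variables $x$ appear literally in constraints, so that any $x$ outside $C^{-1}(1)$ already forces at least one violated constraint for every setting of $w$.

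Second, I would iteratively apply Dinur's gap amplification to boost the soundness gap from $1/|\psi_0|$ to a constant. Each iteration consists of a preprocessing step that makes the constraint graph constant-degree and expanding (by overlaying each variable's occurrences with an expander and adding equality constraints), a powering step that raises the constraint graph to a constant power $\tau$ so that the gap grows by $\Omega(\sqrt{\tau})$ at the cost of an alphabet blow-up, and an alphabet-reduction step that composes with a small inner PCPP (over the long code or Hadamard code) to bring the alphabet back down to the fixed constant $\Sigma_0$. After $O(\log |\psi_0|)$ iterations the gap reaches a constant $\beta > 0$, yielding a 2-CSP of polynomial size. To upgrade this from an ordinary PCP to a PCPP, I would insert, at the base of the construction, an outer check that the input variables $x$ are consistent with an error-corrected encoding $E(x)$ stored among the proof variables, where $E$ is a constant-rate binary code of constant relative distance, together with a low-degree/locally testable check that $E(x)$ is close to the code. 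This base-level modification guarantees that if $x$ is $\alpha$-far in relative distance from $C^{-1}(1)$, then $\Omega(\alpha)$ of the input-consistency constraints must be violated.

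The main obstacle, as usual in the PCPP literature, is ensuring that the proximity accounting survives composition and powering: the danger is that during alphabet reduction the inner PCPP ``repairs'' an outer local window without ever touching $x$, thereby decoupling the fraction of violated constraints from the Hamming distance $d(x, C^{-1}(1))$. The fix, due to Ben-Sasson et al.~\cite{BGH+} and used throughout Dinur's proof, is to work with a robust form of PCPP (assignment tester) in which outer soundness is measured by the \emph{average Hamming distance} of each local window from a satisfying window, and the inner PCPP is itself an assignment tester; robust soundness composes gracefully and multiplies cleanly with the powering gain, so the $\Omega(d(x,C^{-1}(1))/n)$ contribution from the base layer propagates unaffected to the final constraint system. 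Once this invariant is set up, verifying the final parameters is routine: the alphabet is a fixed $\Sigma_0$ (chosen once and for all as the alphabet of the innermost assignment tester), the security constant $\beta$ is the product of the constant base-layer penalty and the constant amplified gap, the construction is polynomial time, and the proof size $|z|$ is polynomial in $t$ because each iteration enlarges the instance by only a constant factor and only $O(\log t)$ iterations are needed.
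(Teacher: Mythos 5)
Theorem~\ref{thm:pcpp} is not proved in this paper at all: it is imported as a black box, cited directly as Corollary~9.3 of Dinur's gap-amplification paper~\cite{Din}. So there is no ``paper's own proof'' to compare against, and your proposal should really be read as an attempted reconstruction of Dinur's argument rather than of anything in this manuscript.

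As a reconstruction of~\cite{Din} your sketch has the right shape: Tseitin-encode the circuit, embed the input through an error-correcting code so that distance to the accepting set becomes visible, then run the preprocess--power--compose loop $O(\log t)$ times with a constant-size inner assignment tester so that each iteration grows the instance by a constant factor and (roughly) doubles the rejection probability, and observe that assignment-tester soundness (proximity) is exactly what Dinur's composition and powering are set up to preserve. One imprecision worth flagging: your claim that the base-layer modification already forces $\Omega(\alpha)$ of the constraints to be violated whenever $x$ is $\alpha$-far from $C^{-1}(1)$ is too strong as stated. The $O(n)$ consistency constraints (tying $x$ to the stored codeword) live alongside $O(t)$ gate constraints, so even if an $\alpha$-fraction of the consistency constraints are violated the overall violated fraction is only $\Omega(\alpha n/t)$. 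What you actually get at the base is rejection probability $\Omega(\alpha/\mathrm{poly}(t))$, and it is precisely the job of the $O(\log t)$ amplification rounds to lift this to a constant times $\alpha$; saying the base already gives $\Omega(\alpha)$ skips the point of the amplification. This does not break the proof strategy---Dinur's Theorem~9.1/Corollary~9.3 carry exactly this bookkeeping---but the sketch as written misattributes where the proximity constant comes from.
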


Following the techniques of earlier papers working with PCPPs, we will use PCPPs in conjunction with efficient error-correcting codes.  A (binary) \emph{code} is an injective map $E: \{0, 1\}^N \rightarrow \{0, 1\}^{N'}$ where $N' \geq N$.  We also use $E$ to denote the image of the map, i.e., we consider $E \subseteq \{0, 1\}^{N'}$.  The \emph{minimum distance} of the code is the minimum over distinct $u, v \in E$ of $d(u, v)$.  An algorithm $D$ \emph{decodes $E$ from an $\eta$ fraction of errors} if, given any string $u$ at relative distance at most $\eta$ from some $u' \in E$, $A(u)$ outputs $u'$.  Note that for such decoding to be possible, the minimum distance must be greater than $2\eta N'$.

We will use the following well-known fact: 

\begin{mythm}\label{thm:goodcodes}
There is a polynomial-time computable code $E$ for all input lengths $N$ with output length $N' = O(N)$, and an $\eta > 0$, such that $E$ can be polynomial-time decoded from an $\eta$ fraction of errors.
\end{mythm}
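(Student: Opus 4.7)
The plan is to obtain the claimed code via Forney's concatenation construction, combining a Reed--Solomon outer code with a short binary inner code. This is the standard route to an explicit, polynomial-time encodable and decodable binary code of constant rate and constant relative distance, and it fits the parameters needed by Theorem~\ref{thm:goodcodes} exactly.

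First I would fix the outer code to be a Reed--Solomon code over $\bF_q$ where $q$ is a power of two and $q = \Theta(N)$, with rate $1/2$; this maps roughly $N/(2 \log q)$ symbols to $N/\log q$ symbols and, by the Singleton bound, has relative distance $1/2$. For the inner code I need a binary linear code of length $O(\log q) = O(\log N)$ encoding $\log q$ bits with some constant relative distance $\delta > 0$; such a code exists by the Gilbert--Varshamov bound, and since its parameters are logarithmic in $N$, I can find one in time $\poly(N)$ by exhaustive search over all generator matrices of the appropriate dimensions, testing the minimum distance of each candidate directly. Concatenating yields a binary code of length $N' = O(N)$, rate $\Omega(1)$, and relative distance at least $\delta / 2 = \Omega(1)$. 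Encoding is polynomial time since both the outer and inner encodings are.

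For decoding from an $\eta$-fraction of errors (for a sufficiently small constant $\eta$), I would first decode each of the $N/\log q$ inner blocks by brute force over all $q = \poly(N)$ inner codewords, taking $\poly(N)$ time in total. An inner block is decoded correctly whenever it contains fewer than $\delta \log q / 2$ errors, so by a counting argument at most a $(2\eta/\delta)$-fraction of outer symbols are wrong. Choosing $\eta$ small enough that $2\eta/\delta < 1/4$ leaves the outer received word within the unique decoding radius of the Reed--Solomon code, and the Berlekamp--Welch algorithm recovers the message in polynomial time.

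I do not expect a genuine obstacle: the only points to check are that the exhaustive search for a good inner code and the brute-force inner decoding both run in $\poly(N)$ time, which follows once $q = \poly(N)$, and that the rate/distance/radius parameters compose consistently, which is handled by the standard inequalities above. If a cleaner route is preferred, one could instead cite a single explicit construction such as Justesen codes or Spielman's linear-time codes, both of which directly yield the conclusion; the concatenation sketch above is essentially the simplest self-contained derivation.
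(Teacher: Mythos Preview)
Your construction is correct and is one of the standard ways to establish this fact. The paper, however, does not prove Theorem~\ref{thm:goodcodes} at all: it introduces the statement as a ``well-known fact'' and immediately follows it only with the remark that many constructions are known, citing Goldwasser et al.\ for a version with an $\ACzero$ decoder. So there is no proof in the paper to compare against; you have simply supplied a self-contained argument where the paper is content to cite.

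A couple of minor comments on your sketch, neither of which affects correctness. First, the parameter bookkeeping around $q = \Theta(N)$ and the symbol counts is slightly loose (to accommodate an $N$-bit message at rate $1/2$ you need $q \log q = \Theta(N)$, not $q = \Theta(N)$, though taking $q$ a power of two with $q = \Theta(N)$ certainly still works since then $q \geq 2N/\log q$); in a polished write-up you would want to fix one convention and carry it through. Second, as you yourself note, citing Justesen codes directly would be the cleanest one-line reference if you prefer not to redo the concatenation by hand.
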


Many such constructions are known; recently Goldwasser et al.~\cite{GGH+} gave a construction in which the decoder algorithm can be implemented in $\ACzero$, i.e., with constant-depth, polynomial-size Boolean circuits.

\subsection{Promise problems and $\prAM$}\label{sec:promise}

A \emph{promise problem} is a pair $\Pi = (\Pi_{YES}, \Pi_{NO})$ of disjoint subsets of $\{0,1\}^*$ (the `yes' and `no' instances, respectively).  For a function $s(n) \in [0, 1]$, we say that $(\Pi_{YES}, \Pi_{NO}) \in \prAM_{1, s(n)}$ if there exists a polynomial-time randomized algorithm $M(x, r, w)$, with $|r|, |w| = O(\poly(n))$ such that:
\begin{enumerate}
\item (Completeness) If $x \in \Pi_{YES}$, then with probability $1$ over the random string $r$, there exists a $w = w(r)$ such that $M(x, r, w) = 1$;
\item (Soundness) If $x \in \Pi_{NO}$ and $|x| = n$, then the probability over the random string $r$ that there exists a $w$ such that $M(x, r, w) = 1$, is at most $s(n)$.
\end{enumerate}
The algorithm $M$ defines an `Arthur-Merlin protocol': we consider that a polynomially bounded verifier Arthur chooses a random `challenge' $r$ for the computationally unbounded Merlin, who sees $r$ and gives a response $w$ which Arthur accepts or rejects..

We define $\prAM = \prAM_{1, 1/3}$.  A promise problem $\Pi_1 = (\Pi_{YES}, \Pi_{NO})$ is \emph{$\prAM$-hard} if for all $\Pi' = (\Pi'_{YES}, \Pi'_{NO})$ in $\prAM$, there exists a polynomial-time computable reduction $R(x)$, such that, if $x \in \Pi'_{YES}$, then $R(x) \in \Pi_{YES}$, while if $x \in \Pi'_{NO}$, then $R(x) \in \Pi_{NO}$.  We say that $\Pi$ is \emph{$\prAM$-complete} if $\Pi$ is in $\prAM$ and is $\prAM$-hard.

It is not hard to see that for any $\Pi \in \prAM$, the soundness parameter $1/3$ in the protocol can be made exponentially small in $n$, by parallel repetition of the original protocol.  However, we require soundness-amplification that is more efficient in its use of randomness.  This is provided by a result of Bellare et al.~\cite{BGG}.  They state their theorem for $\AM$, not for $\prAM$, but the proof carries over without changes to the promise setting and we state it for this setting.

\begin{mythm}\label{thm:amp}~\cite{BGG}
Let $\Pi = (\Pi_{YES}, \Pi_{NO}) \in \prAM$, where $M(x, r, w)$ is a polynomial-time predicate defining an Arthur-Merlin protocol for $\Pi$.  Let $n = |x|$, and fix a polynomial $m(n)$.  Then there exists an Arthur-Merlin protocol for $\Pi$ defined by a polynomial-time predicate $M'(x, r', w')$, with $|w'| \leq O(\poly(n)), |r| \leq |r'| \leq O(|r| + m(n))$, and with soundness $2^{-m(n)}$.
\end{mythm}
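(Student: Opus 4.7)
The plan is to amplify soundness by derandomizing parallel repetition via a random walk on a constant-degree expander graph whose vertex set is $\{0,1\}^{|r|}$---the textbook Ajtai--Koml\'os--Szemer\'edi / expander-walk technique. The key point is that such a walk spends only $O(1)$ random bits per extra challenge, yielding the concentration effect of independent sampling at a randomness cost of only $O(k)$ extra bits for $k$ repetitions.

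First I would fix, for each input length, an explicit constant-degree expander $G_n$ on vertex set $\{0,1\}^{|r|}$ with constant spectral gap---for instance, an appropriate power of a Margulis- or Gabber--Galil-style construction, padded to the right vertex count. Set $k := c \cdot m(n)$ for a constant $c$ to be chosen. The amplified protocol $M'(x, r', w')$ parses $r'$ as a starting vertex $r_1 \in \{0,1\}^{|r|}$ together with $k-1$ edge labels of $O(1)$ bits each, which together determine a walk $r_1, r_2, \ldots, r_k$ on $G_n$. The witness $w'$ is parsed as a $k$-tuple $(w_1, \ldots, w_k)$, and $M'$ accepts iff $M(x, r_i, w_i) = 1$ for every $i \leq k$. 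Completeness is immediate: if $x \in \Pi_{YES}$, then for every $r_i$ some valid $w_i$ exists by the completeness of $M$, so Merlin can answer the entire walk with probability $1$.

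For soundness, fix $x \in \Pi_{NO}$ and let $B_x := \{r : \exists w,\ M(x,r,w) = 1\}$; the original soundness bound yields $|B_x|/2^{|r|} \leq 1/3$. Under Merlin's best play, $M'$ accepts exactly when every vertex of the walk lands in $B_x$. The standard expander hitting-set bound (Ajtai--Koml\'os--Szemer\'edi, sharpened by Gillman and others) asserts that the probability a length-$k$ walk stays inside a set of density at most $1/3$ is at most $2^{-\Omega(k)}$, so choosing $c$ sufficiently large gives soundness $\leq 2^{-m(n)}$. The total randomness used is $|r| + O(k) = O(|r| + m(n))$ and the new witness length is $k \cdot |w| = O(\poly(n))$, matching the claimed bounds.

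The only substantive step is invoking the expander walk concentration bound as a black box; the rest---preservation of perfect completeness, the transition to the promise setting (soundness is stated only for instances in $\Pi_{NO}$, so the original argument applies verbatim), and the randomness bookkeeping---is routine. The one minor care point is that $G_n$ must be explicit enough that walk transitions are computable in polynomial time, which holds for the standard Margulis or Gabber--Galil families.
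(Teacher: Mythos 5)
Your proposal matches the approach in the cited Bellare--Goldreich--Goldwasser result, which the paper invokes as a black box without reproving: randomness-efficient soundness amplification for Arthur--Merlin protocols via a length-$k$ expander walk, with the AKS hitting-set bound giving soundness $2^{-\Omega(k)}$ at a cost of only $O(1)$ fresh random bits per step. The construction, completeness and soundness analysis, parameter bookkeeping, and the observation that the argument transfers verbatim to the promise setting all check out and mirror both the cited source and the paper's own remark to that effect.
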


The randomness-efficiency in the above result has been improved in more recent work (see~\cite{MU} for a discussion), but we do not need or use these improvements.

\subsection{AM-$k$-CSPs}\label{sec:amcspdef}

By an \emph{AM-$k$-CSP} we mean a $k$-CSP $\psi(r, z)$, where $r$ are Boolean and $z$ may be non-Boolean.  We call $r$ the `Arthur-variables' and $z$ the `Merlin-variables'.  Informally speaking, we are interested in the game in which the $r$ are first set uniformly by Arthur, and then Merlin sets $z$ to try to maximize the fraction of constraints of $\psi$ satisfied by $(r, z)$.

For any fixed $k \geq 1$, soundness parameter $s = s(|r|) \in [0, 1]$, alphabet $\Sigma$, and fixed $\eps \in (0, 1]$, we define the promise problem $\mathsf{Gap-AM-k-CSP}_{1, 1 - \eps, s(|r|)} = (\Pi_{AM-CSP, YES}, \Pi_{AM-CSP, NO})$ as follows.  Both `yes' and `no' instances are AM-$k$-CSPs over $\Sigma$.
If $\psi(r, z) \in\Pi_{AM-CSP, YES}$, we are promised that for all choices of $r$, there exists a $z$ such that $\Val_{\psi}(r, z) = 1$.  
If $\psi(r, z) \in \Pi_{AM-CSP, NO}$, we are promised that only for at most an $s(|r|)$ fraction of strings $r$ does there exist a $z$ with $\Val_{\psi}(r, z) > 1 - \eps$.

\section{An Augmented PCPP}\label{sec:augpcpp}

As a tool for proving Theorems~\ref{thm:amkcsp}, \ref{thm:avghard1}, and~\ref{thm:avghard2}, we prove the following `augmented' version of the PCPP Theorem (Theorem~\ref{thm:pcpp}), which we derive from Theorem~\ref{thm:pcpp}.  We remark that the proof of our $\prAM$-completeness result (Theorem~\ref{thm:amkcsp}) uses only condition (\ref{lem:searchcsp}.i) of the Lemma below; this first part of the Lemma is quite similar to previous uses of PCPPs.  Also, our use of error-correcting codes will only be important for establishing condition (\ref{lem:searchcsp}.ii).

\begin{mylem}\label{lem:searchcsp} 
There is a finite alphabet $\Sigma_0$ such that the following holds.  For any $\eps > 0$ there is a $\nu > 0$ and a polynomial-time algorithm $A$ that takes as input a Boolean circuit $C = C(r, w)$.  $A$ outputs a 2-CSP $\psi(r, z)$, where $|z| = O(\poly(|C|))$ and the variables of $z$ are over $\Sigma_0$.  Letting $\ell = |r|$, $\psi$ has the following properties:

\begin{itemize}
\item[(\ref{lem:searchcsp}.i)] For all $r$, if there is a $w$ such that $C(r, w) = 1$, then there is a $z$ such that $\Val_{\psi}(r, z) = 1$.  On the other hand, if $r$ is $\alpha \ell$-far from any $r'$ for which $C(r', \cdot)$ is satisfiable, then for all $z$, $\Val_{\psi}(r, z) < 1 - \Omega(\alpha)$.

\item[(\ref{lem:searchcsp}.ii)] Suppose $P(r)$ is any (possibly randomized) procedure such that with probability at least $p = p(\ell)$ over a uniform $r \in \{0,1\}^{\ell}$ and any randomness in $P$, $P(r)$ outputs a $z$ such that $\Val_{\psi}(r, z) > 1 - \nu$.

Then there exists a deterministic procedure $\tilde{P}(r)$, such that with probability at least $p(\ell)\cdot 2^{- \eps \ell}$ over uniform $r$, $\tilde{P}(r)$ outputs a $w$ such that $C(r, w) = 1$.
Moreover, $\tilde{P}(r)$ is computable by a nonuniform, $\poly(|C|)$-sized circuit that makes a single oracle call to $P$ on the same input length (with $P$'s randomness fixed nonuniformly).
\end{itemize} 

\end{mylem}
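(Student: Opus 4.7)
The plan is to build $\psi$ by applying Dinur's PCPP (Theorem~\ref{thm:pcpp}) to an augmented circuit derived from $C$ together with a good error-correcting code $E$ from Theorem~\ref{thm:goodcodes}. Let $E\colon \{0,1\}^{|w|} \to \{0,1\}^{L_w}$ be polynomial-time decodable from an $\eta$-fraction of errors, and let $\hat{C}(r, v)$ accept iff $v$ is a codeword of $E$ and $C(r, E^{-1}(v)) = 1$; this is a $\poly(|C|)$-size circuit. Applying Theorem~\ref{thm:pcpp} to $\hat{C}$ yields a 2-CSP $\psi_{\hat{C}}(r, v, z'')$ over $\Sigma_0$, which I take as $\psi(r, z)$ with $z = (v, z'')$, augmented with the replicated-$r$ consistency machinery described next. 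Routing $w$ through $E$ is precisely what will enable witness decoding in part~(ii).

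For property~(\ref{lem:searchcsp}.i), completeness is immediate: set $v = E(w)$ for a valid witness and invoke PCPP completeness. For the distance direction, if $r$ is $\alpha\ell$-far from every satisfiable $r'$, then projecting onto the $r$-coordinate of the PCPP input gives $d((r, v), \hat{C}^{-1}(1)) \geq \alpha\ell$ for every $v$, and Dinur's security yields a violation fraction of $\Omega(\alpha\ell/(\ell + L_w))$, already $\Omega(\alpha)$ when $L_w = O(\ell)$. To handle $L_w \gg \ell$, I replicate $r$ inside the PCPP input: augment $\hat{C}$ to take input $(r^{(1)}, \dots, r^{(T)}, v)$ with $T = \Theta(L_w/\ell)$ and to demand $r^{(1)} = \cdots = r^{(T)}$, place the copies $r^{(i)}$ ($i \geq 2$) into $z$, and add 2-local equality constraints $r^{(1)}_j = r^{(i)}_j$ with multiplicities tuned so the equality-constraint weight matches the PCPP-constraint weight. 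Then either the equality constraints catch the adversary (violation $\geq \alpha$ on that side), or the copies are consistent and the effective input $(r,\dots,r,v)$ is $\Theta(T\alpha\ell)$-far from $\hat{C}^{-1}(1)$ against denominator $\Theta(T\ell + L_w) = \Theta(L_w)$, giving PCPP violation $\Omega(\alpha)$; the balanced weighting yields $\Omega(\alpha)$ overall.

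For property~(\ref{lem:searchcsp}.ii), fix $\nu > 0$ small enough, as a function of $\eps$, that $\Val_{\psi}(r, z) > 1 - \nu$ forces $v$ to lie within $E$'s decoding radius of some $E(w^*)$ whose corresponding good string $r^*$ sits in a Hamming ball of radius $\eps\ell/K$ around $r$, for a constant $K$ to be chosen. Applying $E$'s decoder to $v$ recovers $w^*$, a witness for $r^*$ but possibly not for $r$. I bridge this gap using a hardcoded nonuniform shift $e^* \in \{0,1\}^\ell$ with $|e^*| \leq \eps\ell/K$: define $\tilde{P}(r)$ to query $P(r \oplus e^*)$ (with $P$'s randomness also hardcoded), decode the returned $v$, and output the recovered $w^*$. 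The existence of an effective $e^*$ is an averaging argument: summing $\mathbf{1}[r^*(r') \oplus r' = e]$ over shifts $e$ with $|e| \leq \eps\ell/K$ counts each successful $r'$ (fraction $\geq p$) exactly once, so some fixed $e^*$ handles at least $p \cdot 2^\ell / V_{\ell, \eps\ell/K} \geq p \cdot 2^{-H(\eps/K)\ell}$ many $r$. Choosing $K$ large enough (as a function of $\eps$ alone) that $H(\eps/K) \leq \eps$ gives the required $p \cdot 2^{-\eps\ell}$ success probability.

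The main obstacle I anticipate is the simultaneous parameter interlock between parts~(i) and~(ii): the replication factor $T$, equality multiplicities, code distance, and $\nu$ must mesh so that~(i) gives $\Omega(\alpha)$ uniformly in $|w|/\ell$ while~(ii) places the decoded $r^*$ inside the small Hamming ball the counting argument needs. In particular, arranging for $\nu$ to depend only on $\eps$ rather than on $|C|$, in the face of $L_w$ scaling with $|w|$, will likely require also replicating the encoded witness $v$, or isolating the $r$-distance inside the PCPP's soundness bound via an inner code on $r$ so that the effective PCPP denominator seen by the adversary is $\Theta(\ell)$.
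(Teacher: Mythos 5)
Your construction matches the paper's in its main ideas: build an augmented circuit that routes the witness through a good error-correcting code, replicate $r$ so that the $r$-part dominates the PCPP's input length, apply Dinur's PCPP, and for part~(ii) use a nonuniform Hamming-ball shift with a $V_{\ell,\gamma\ell}\le 2^{H(\gamma)\ell}$ counting argument. Your averaging over shifts $e^*$ is exactly equivalent to the paper's argument (sample a random small-weight $v$, condition on $r+v$ being good, then fix $v$ nonuniformly).

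The one place you diverge is in how the $r$-replication is realized, and this is precisely what creates the ``parameter interlock'' worry you flag at the end. You propose to make $r^{(2)},\dots,r^{(T)}$ Merlin-controlled $z$-variables and add explicit $2$-local equality constraints with tuned weights, which then forces a two-case analysis (either the equality side catches the cheat, or the copies are nearly consistent and the PCPP side catches it). The paper avoids all of this: it builds $Q(r_1,\dots,r_b,u)$ with $b=\lceil N'/\ell\rceil$ (so the $r$-blocks are at least half of the PCPP input), applies Theorem~\ref{thm:pcpp} to $Q$, and then \emph{substitutes} the Arthur variables $r$ for every $r_i$ in the resulting CSP. Since the copies are then literally the same variables, there is no adversary to set them inconsistently, no extra constraints, and no weighting to balance: for a fixed $r$ that is $\alpha\ell$-far from every satisfiable $r'$, the assignment $(r,\dots,r,u')$ is automatically $\alpha/2$-far in relative distance from $Q^{-1}(1)$, so the PCPP's soundness gives a deficit $\ge\alpha\beta/2$ outright. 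The same substitution also makes part~(ii) clean: a $(1-\nu)$-valued assignment gives $(r,\dots,r,u)$ that is $\nu/\beta$-close to $Q^{-1}(1)$, and because the $r$-blocks are between one half and all of the input while $|u|=N'$ is between half and all of $b\ell$, one immediately extracts $d(r,r_1)<\gamma\ell$ and $d(u,u')<\eta N'$ with $\nu := \eta\beta\gamma/4$ depending only on $\eps$ (through $\gamma$). So the further machinery you speculate you might need in your last paragraph (an inner code on $r$, replication of $v$) is unnecessary once you replicate by substitution rather than by adding Merlin copies. One small detail the paper also handles, which you should not forget: after the CSP is built, the $u$-block variables become $\Sigma_0$-valued Merlin variables, so the constraints must be modified to reject non-Boolean values there.

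Net assessment: same high-level plan and same core lemmas, but your consistency-constraint gadget is a genuinely different (and strictly more complicated) way to realize the replication step; the paper's variable-substitution trick eliminates the balance analysis and directly gives $\nu=\nu(\eps)$.
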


\begin{proof}

Let $N := |w|$.  We can assume, by padding $w$ if necessary, that $N \geq \ell$.  Let $E: \{0,1\}^{N} \rightarrow \{0, 1\}^{N'}$ be the error-correcting code given by Theorem~\ref{thm:goodcodes} (with $N' = O(N)$), applied to inputs of length $N$.

Let $b = b(\ell) := \lceil \frac{N'}{\ell}  \rceil$.  Define a predicate $Q(r_1, r_2, \ldots, r_b, u)$, with $|r_i| = |r| = \ell$ for $i \leq b$ and $|u| = N'$, by the following rule: $Q(r_1, r_2, \ldots r_b, u) = 1$ iff $r_1 = r_2 = \ldots, = r_b$, $u = E(w)$ for some $w$, and $C(r_1, w) = 1$.  Clearly we can efficiently construct a circuit of size $O(\poly(|C|))$ computing $Q$.  Note that by our setting of $b$, there are more variables in the blocks $r_j$ than in $u$.

Let $\psi_0 = \psi_Q((r_1, \ldots, r_b, u), Z)$ be the PCPP 2-CSP for $Q$ over alphabet $\Sigma_0$ given by Theorem~\ref{thm:pcpp}, efficiently constructible and of size $O(\poly(|C|))$.  We take $\psi_0$ and make two changes.  First, we substitute the variables of $r$ for the corresponding variables of each vector $r_i$.  Second, we allow the variables of $u$ to range over all of $\Sigma_0$ (we may assume $\{0, 1\} \subseteq \Sigma_0$), and modify each constraint to reject in the case where one or more of its $u$-variables are set to a non-Boolean value.

We denote the resulting 2-CSP by $\psi(r, u, Z)$.  We claim that this efficiently constructible 2-CSP satisfies the conditions of Lemma \ref{lem:searchcsp}'s statement, with $z := (u, Z)$ and $\Sigma_0$ as in Theorem~\ref{thm:pcpp}.  Our setting of $\nu > 0$ will be determined later.

First, we show that condition (\ref{lem:searchcsp}.i) is satisfied.  Consider any $r \in \{0, 1\}^{\ell}$.  Suppose that there exists $w \in \{0, 1\}^N$ such that $C(r, w) = 1$.  Then $Q(r, r, \ldots , r, E(w)) = 1$.  Using the completeness property of PCPPs, there exists a $Z$ such that $\Val_{\psi}((r, E(w)), Z) = 1$.  On the other hand, say $r$ is $\alpha \ell$-far from any $r'$ for which $C(r', \cdot)$ is satisfiable.  Given any $u \in \Sigma_0^{N'}$, let  us choose some string $u' \in \{0, 1\}^{N'}$ which agrees with $u$ on any variable where $u$ is Boolean.  We observe that $(r, r, \ldots , r, u')$ is $\alpha/ 2$-far in relative distance from $Q^{-1}(1)$.  By the soundness property of PCPPs, for any choice of $Z$, $\Val_{\psi}((r, u'), Z) < 1 - \frac{\alpha \beta }{2}$, where $\beta > 0$ is the constant from Theorem~\ref{thm:pcpp}.  Also, by the way we defined $\psi$, $\Val_{\psi}((r, u), Z) \leq \Val_{\psi}((r, u'), Z)$.  We have verified condition (\ref{lem:searchcsp}.i).

\vspace{.5 em}

Now we turn to condition (\ref{lem:searchcsp}.ii).  Let $P(r)$ be as described in (\ref{lem:searchcsp}.ii).  Note that  by averaging, we may fix (nonuniformly) some value of the randomness used by $P$ while preserving the lower-bound $p(\ell)$ on its success probability over the choice of $r$; we do so and consider $P$ a deterministic algorithm from now on.  We set $ \nu := \eta \beta \gamma/4$, where $\eta$ is the constant in Theorem~\ref {thm:goodcodes}, $\beta$ is the constant in Theorem~\ref{thm:pcpp}, and $\gamma \in (0,1)$ is a small constant to be announced.

Let $P'(r)$ be the procedure that, on input $r$, computes $z = P(r) = (u, Z)$ and runs the polynomial-time decoder for $E$ on $u$, yielding a string $w \in \{0,1\}^{N}$.  Let $P'$ output $w$.

We analyze the behavior of $P'$.  Let $z = (u, Z)$ be any output of $P(r)$ such that $\Val_{\psi}(r, z) > 1 - \nu$.  By the soundness property of PCPPs, the string $(r, r, \ldots, r, u)$ must be $\frac{\nu}{\beta} = \frac{\eta \gamma}{4}$-close in relative distance to some string $(r_1, \ldots, r_b, u')$ for which $Q(r_1, \ldots, r_b, u') = 1$ (and thus $r_1 = \ldots = r_b$ and $u' \in E$).  Since $|u'| = N' \leq b(\ell)\cdot |r| \leq 2N'$, we find that $d(r, r_1) < \gamma \ell$ and $d(u, u') < \eta N'$.  The latter inequality implies that when $P'$ applies the polynomial-time decoder to $u$, it correctly recovers $w = E^{-1}(u')$.  Since $Q(r_1, \ldots, r_b, u') = 1$, we have $C(r_1, w) = 1$.

To analyze $\tilde{P}$, say that a string $r \in \{0, 1\}^n$ is \emph{good} if $P'(r)$ outputs a $w$ such that there exists an $r'$ at distance at most $\gamma \ell$ from $r$, such that $C(r', w) = 1$.  Our analysis of $P'$, combined with our original assumption about the success probability of $P$, shows that at least a $p(\ell)$ fraction of strings $r$ are good.

Now we define the procedure $\tilde{P}(r)$: $\tilde{P}(r)$ first chooses a vector $v \in \{0,1\}^l$ uniformly from the set of all strings of Hamming weight at most $\gamma l$, then outputs $P'(r + v)$.  Note that, if $r$ is selected uniformly, $r + v$ is also uniform and, after conditioning on its value, $r$ is uniformly distributed over all strings at distance at most $\gamma \ell$ from $r + v$.  Thus, conditioning on $r + v$ being good, we have at least a $1/V_{\ell, \gamma \ell} \geq 2^{-H(\gamma)\ell}$ chance that $C(r, P'(r + v)) = 1$.  So the overall success probability of $\tilde{P}(r)$ is at least $p(\ell)\cdot 2^{-H(\gamma)\ell}$.  Since $H(\gamma) \rightarrow 0$ as $\gamma \rightarrow 0$, we may choose $\gamma > 0$ so that the success probability is at least $p(\ell)\cdot 2^{-\eps \ell}$.

$P'$ is clearly a polynomial-time algorithm making one call to $P$, while $\tilde{P}$ simply makes one call to $P'$ after its random sampling and bitwise addition mod 2.  The choice of $v$ can be nonuniformly fixed in a way that does not decrease the success probability, so $\tilde{P}$ can be implemented with the resources claimed.  Thus we have verified condition (\ref{lem:searchcsp}.ii), completing the proof of the Lemma.
\end{proof}

\section{PCP Characterization of $\prAM$}\label{pcpcharsec}
In this section we prove Theorem~\ref{thm:amkcsp}, which we restate in the terminology of Section~\ref{sec:amcspdef}:
\begin{restatethmA} There is a finite alphabet $\Sigma$ and a constant $\eps > 0$, such that 
\\*$\mathsf{Gap-AM-2-CSP}_{1, 1 - \eps, \exp\{-\Omega(|r|)\}}$ is $\prAM$-complete.
\end{restatethmA}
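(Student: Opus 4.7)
The proof splits into verifying membership in $\prAM$ and proving $\prAM$-hardness. For membership, I would use the natural Arthur-Merlin protocol on an input $\psi(r, z)$: Arthur picks $r$ uniformly, Merlin replies with $z$, and Arthur accepts iff $\Val_{\psi}(r, z) = 1$. Perfect completeness is immediate from the $\Pi_{YES}$ promise. On NO instances, acceptance probability is bounded by the fraction of $r$ admitting some $z$ with $\Val_{\psi}(r, z) > 1 - \eps$, which is $\exp(-\Omega(|r|))$ by the definition of $\Pi_{NO}$---well below $1/3$ for all but finitely many input sizes, which one handles by padding.

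For hardness, fix $\Pi = (\Pi_{YES}, \Pi_{NO}) \in \prAM$ with verifier $M(x, r, w)$. The plan is to compose the randomness-efficient soundness amplification of Theorem~\ref{thm:amp} with the PCPP reduction of Lemma~\ref{lem:searchcsp}(\ref{lem:searchcsp}.i). I would first pick a small constant $\alpha > 0$ and apply Theorem~\ref{thm:amp} with $m(n)$ a sufficiently large constant multiple of $|r|$ (hence a polynomial in $n$), obtaining an amplified verifier $M'(x, r', w')$ with $|r'|$ still polynomial in $n$ and soundness $2^{-m(n)}$, arranged so that $m(n) \geq (H(\alpha) + \Omega(1))\,|r'|$. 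This quantitative balance is possible precisely because Theorem~\ref{thm:amp} expands $|r|$ only additively in $m(n)$; naive parallel repetition, which inflates $|r'|$ proportionally to the amplification, would not suffice for the union bound that follows.

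I would then view $C_x(r', w') := M'(x, r', w')$ as a circuit and feed it into Lemma~\ref{lem:searchcsp} to obtain a 2-CSP $\psi_x(r', z)$ over the fixed alphabet $\Sigma_0$; this is the output of the reduction, with the gap parameter $\eps$ set to the implicit $\Omega(\alpha)$ from Lemma~\ref{lem:searchcsp}(\ref{lem:searchcsp}.i). For YES instances, every $r'$ admits some $w'$ with $C_x(r', w') = 1$, so every $r'$ admits $z$ with $\Val_{\psi_x}(r', z) = 1$, placing $\psi_x$ in the YES set. For NO instances, let $B \subseteq \{0,1\}^{|r'|}$ be the set of $r'$ for which $C_x(r', \cdot)$ is satisfiable; amplified soundness gives $|B| \leq 2^{|r'| - m(n)}$. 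The $\alpha|r'|$-neighborhood of $B$ then has size at most $|B| \cdot V_{|r'|, \alpha|r'|} \leq 2^{|r'| - m(n) + H(\alpha)|r'|} = 2^{|r'|}\cdot \exp(-\Omega(|r'|))$ by our choice of $m(n)$. Outside this neighborhood, Lemma~\ref{lem:searchcsp}(\ref{lem:searchcsp}.i) forces $\Val_{\psi_x}(r', z) < 1 - \Omega(\alpha) = 1 - \eps$ for every $z$, so $\psi_x$ lies in the NO set.

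The main obstacle is the quantitative tension between the two tools: the volume bound $V_{|r'|, \alpha|r'|} \leq 2^{H(\alpha)|r'|}$ forces the amplified soundness to be $2^{-\Theta(|r'|)}$, yet the PCPP input length $|r'|$ is exactly what the amplification is expanding. Randomness-efficient amplification breaks the circle by keeping $|r'|$ polynomial in $n$ while pushing the soundness exponentially below $2^{-H(\alpha)|r'|}$; after this, applying Lemma~\ref{lem:searchcsp} and routine bookkeeping give the result.
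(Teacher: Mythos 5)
Your proposal follows the paper's proof in its essential structure: the same direct protocol for membership, and the same hardness reduction composing the randomness-efficient amplification of Theorem~\ref{thm:amp} with Lemma~\ref{lem:searchcsp}(\ref{lem:searchcsp}.i), closed out by a Hamming-ball union bound against the amplified soundness. One small quantifier-order caveat: you fix $\alpha$ before invoking Theorem~\ref{thm:amp}, but the arrangement $m(n) \ge (H(\alpha) + \Omega(1))|r'|$ can only be realized once $\alpha$ is chosen small enough relative to the hidden constant in the guarantee $|r'| \le O(|r| + m(n))$---increasing $m(n)$ drives $m(n)/|r'|$ up only to roughly the reciprocal of that constant, so the paper correspondingly fixes $D$ with $\ell_2 \le D\ell_1$ and \emph{then} takes $\alpha$ with $H(\alpha) < 1/D$.
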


\begin{proof} 
First, we claim that for any $s(|r|) = o(1)$ and $\eps > 0$, $\mathsf{Gap-AM-2-CSP}_{1, 1 - \eps, s(|r|)} =$\\$\left( \Pi_{AM-CSP, YES}, \Pi_{AM-CSP, NO} \right)$ 
is in $\prAM$.  The protocol is as follows: given a 2-CSP $\psi(r, z)$, Arthur picks $r$ uniformly and Merlin responds with a setting of $z$.  Arthur accepts iff $\Val_{\psi}(r, z) = 1$.  If $\psi \in \Pi_{AM-CSP, YES}$, then clearly Arthur accepts with probability 1 when Merlin responds optimally.  If $\psi \in \Pi_{AM-CSP, NO}$, then Arthur accepts with probability at most $s(|r|)$, which is greater than $2/3$ for large enough $|r|$.  (For instances with $|r|$ below this threshold, Arthur can simply request certificates $z(r)$ for every setting of $r$ and verify that each satisfies $\Val_{\psi}(r, z(r)) = 1$.)
 
Thus our main task is to show that the promise problem is $\prAM$-hard, for appropriate choice of parameters.  Let $\Pi = (\Pi_{YES}, \Pi_{NO}) \in \prAM$, and let $M_1(x, r_1, w_1)$ be a polynomial-time-computable predicate defining an Arthur-Merlin protocol for $\Pi$.  We use parameters $n = |x|,  \ell_1(n) = |r|$; by definition of $\prAM$ we have $\ell_1(n) = O(\poly(n))$ and $|w_1| = O(\poly(n))$.  By padding $r_1$ if necessary we may assume $\ell_1(n) \geq n$.  Apply Theorem~\ref{thm:amp} to $M_1$, with the setting $m(n) := \ell_1(n)$.  Thus we get a new Arthur-Merlin protocol $M_2(x, r_2, w_2)$ for $\Pi$, with $|r_2| = \ell_2(n) \in [n, \ldots, D\cdot \ell_1(n)  ]$ (for some fixed $D > 0$), $|w_2| = O(\poly(n))$, and with soundness $2^{-\ell_1(n)}$.

Given an input $x \in \Pi_{YES} \cup \Pi_{NO}$, we construct a $\poly (n)$-sized circuit $C(r_2, w_2) = C_x(r_2, w_2)$ that accepts iff $M_2(x, r_2, w_2) = 1$.  To this circuit we apply the algorithm $A$ of Lemma~\ref{lem:searchcsp} (with a setting of $\eps > 0$ to be announced), yielding a 2-CSP $\psi = \psi(r_2, z)$ which we make the output of our reduction.

We show the correctness of the reduction.  First, suppose that $x \in \Pi_{YES}$.  Then for each choice of $r_2$, there exists a $w_2$ such that $M_2(x, r_2, w_2) = 1$.  By condition (\ref{lem:searchcsp}.i) of Lemma~\ref{lem:searchcsp}, there exists $z$ such that $\Val_{\psi}(r_2, z) = 1$.  Thus $\psi \in \Pi_{AM-CSP, YES}$.

Now suppose that $x \in \Pi_{NO}$.  Then by the soundness property of $M_2$, the number of strings $r_2$ for which $M_2(x, r_2, \cdot)$ is satisfiable is at most $2^{-\ell_1(n)}\cdot 2^{\ell_2(n)} \leq 2^{(1 - \frac{1}{D})\ell_2(n)}$.  Thus the number of $r_2$ for which there exists an $r'$ at distance $\leq \alpha \ell_2(n)$ from $r_2$, such that $M_2(x, r', \cdot)$ is satisfiable, is at most 
$$
V_{\ell_2(n), \alpha \ell_2(n)}\cdot 2^{(1 - \frac{1}{D})\ell_2(n)}   \leq 2^{(H(\alpha) + 1 - \frac{1}{D})\ell_2(n)}.
$$
Choosing $\alpha > 0$ such that $H(\alpha) < \frac{1}{D}$, we find that with probability $\geq 1 - \exp\{-\Omega(\ell_2(n))\}$ over a uniform choice of $r_2$, $r_2$ is $\alpha \ell_2(n)$-far from any $r'$ such that $C(r', \cdot)$ is satisfiable.  For such $r_2$ and for any $z$, condition (\ref{lem:searchcsp}.i) of Lemma \ref{lem:searchcsp} tells us that $\Val_{\psi}(r_2, z) < 1 - \Omega(\alpha)$.  

Thus if we fix $\eps > 0$ as an appropriately small constant and choose an appropriate $s(|r_2|) = \exp\{-\Omega(|r_2|)\}$, we have $\psi \in \Pi_{AM-CSP, NO}$.  This completes the proof of correctness for our reduction.
\end{proof}

\section{Randomized Optimization Hypotheses Imply Collapse of $\AM$}\label{derandsec}

What significance might Theorem~\ref{thm:amkcsp}, our `PCP characterization of $\AM$', have for the project of trying to prove new upper bounds on the power of this class?  In the Introduction we gave two hypotheses inspired by Theorem~\ref{thm:amkcsp}.  Each of these hypotheses, if true, would have major implications for the study of $\AM$; this is the content of Claims~\ref{clm:hypA} and~\ref{clm:hypB} from the Introduction, which we prove next.

\begin{proof}[Proof of Claim~\ref{clm:hypA}]
Let $L \in \AM$; then $(L, \overline{L}) \in \prAM$.  Given an instance $x$, let Arthur run the reduction in Theorem~\ref{thm:amkcsp} on input $x$, producing a 2-CSP $\psi(r, z)$.  Let Merlin send Arthur a polynomial-sized circuit $C: \{0,1\}^{\ell} \rightarrow \{0,1\}^m$, with $\delta := \eps$ (here $\eps$ is from Theorem \ref{thm:amkcsp}).  Then Arthur runs $C$ on a sufficiently large ($O(\poly(n))$) number of random choices of $r$, accepting only if he finds an $r$ such that $\Val_{\psi}(r, C(r)) \geq 1 - \eps$.  

First suppose $x \in L$; then by the completeness property of our reduction, for all $r$ there exists a $z$ for which $\Val_{\psi}(r, z) = 1$.  If Merlin sends the circuit $C_{\psi}$ assumed to exist by Hypothesis A, then with at least $1/\poly(|\ell|)$ probability over $r$, $\Val_{\psi}(r, C(r)) \geq 1 - \eps$.  So if Arthur samples a sufficiently large (polynomial) number of strings $r$, Arthur will accept with probability $> 2/3$.

Next suppose $x \notin L$; then our reduction guarantees that for all but an exponentially small fraction of strings $r$, for all $z$ $\Val_{\psi}(r, z) < 1 - \eps$.  So Arthur's acceptance probability is negligible no matter what circuit Merlin sends.
Thus we have an $\MA$ protocol for $L$. \end{proof}

\begin{proof}[Proof of Claim~\ref{clm:hypB}]
We apply Hypothesis B with $\delta := \eps/3$, yielding a value $t = t(\delta)$.  Let $L \in \AM$ be given, and let Arthur run the reduction from Theorem~\ref{thm:amkcsp} on input $x$, yielding an instance $\psi(r, z)$.
Let Merlin send a description of a function $F(r)$, where each output of $F$ depends on at most $t$ bits of $r$ (note $F$ can be described in polynomial size).  Arthur performs explicit variable-substitutions $z = F(r)$ in $\psi$ and uses linearity of expectation to exactly compute $\bE_r[\Val_{\psi}(r, F(r))]$.  

If $x \in L$ and Merlin sends $F_{\psi}$ as given by Hypothesis B, this expectation is at least $(1 - \delta)^2 > 1 - 2\eps/3$.  On the other hand, if $x \notin L$ then, regardless of the function sent, this expectation is at most $(1 - \eps) + \exp\{-\Omega(|r|)\}$.  Thus for $|r|$ large enough we can distinguish the two cases.  (If $|r|$ is below a fixed threshold, Arthur can instead request that Merlin send optimal values $z(r)$ for each $r$.)  Arthur's computations are deterministic and polynomial-time, so the above defines an $\NP$ protocol for $L$.
\end{proof}

Note that Claim~\ref{clm:hypB} would hold even if we weakened Hypothesis B, allowing each coordinate of $F(r)$ to depend on $t(\delta, n) = O_{\delta}(\log n)$ coordinates.  We state Hypothesis B in a stronger form because, although we believe it is false, we don't know how to disprove it unconditionally even in the form given.

\section{Evidence Against the Randomized Optimization Hypotheses}\label{nphardsec}

Next we use Lemma~\ref{lem:searchcsp}, in conjunction with known results about amplification of hardness, to prove Theorems~\ref{thm:avghard1} and~\ref{thm:avghard2}.  That is, under various complexity-theoretic assumptions, we exhibit families of 2-CSPs $\psi(r, z)$ for which it is hard on average to approximately optimize over $z$, for randomly chosen $r$.  As mentioned earlier, the conclusions of both Theorems are easily seen to falsify both of our Randomized Optimization Hypotheses, and we consider this evidence that these hypotheses are probably false.  



\vspace{.5 em}

First, amplification of hardness in $\NP\cap \coNP$ from $(1 - 1/\poly(n))$-hardness to $2/3$-hardness is made possible by the following result of Impagliazzo~\cite[essentially Thm. 2]{Imp}:
\begin{mythm}\label{thm:mild1}~\cite{Imp} Suppose that there exists a language $L$, a function $s(n)$, and a $c > 0$, such that $L$ is $(1 - \frac{1}{n^c})$-hard for size $s(n)$.  Then for any $c' > 0$, there exists another language $L'$ such that $L'$ is $(\frac{1}{2} + O(\frac{1}{n^{c'}}))$-hard for size $\frac{s(n)}{n^{O(1)}}$.  Moreover, $L'$ is polynomial-time truth-table reducible to $L$.

\end{mythm}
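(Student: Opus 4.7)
The plan is to define $L'$ as the parity of $L$ applied to several independent instances, and to amplify the mild hardness into $(1/2 + 1/n^{c'})$-hardness via Impagliazzo's hardcore lemma combined with Yao's XOR lemma (both proved in~\cite{Imp}). Concretely, for a parameter $k = k(n) = \poly(n)$ to be chosen, I would set
\[
L'(x_1, \ldots, x_k) := L(x_1) \oplus L(x_2) \oplus \cdots \oplus L(x_k).
\]
The polynomial-time truth-table reduction from $L'$ to $L$ is then immediate: on input $\bar{x}$, query $L$ on each $x_i$ and XOR the answers.

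The first step is to invoke the hardcore lemma: since $L$ is $(1 - 1/n^c)$-hard for size $s(n)$, for any target $\eps_0 > 0$ there is a distribution $\mu_n$ on $\{0,1\}^n$ that is $\delta$-dense in uniform with $\delta := 1/n^c$ (i.e.\ $\mu_n(x) \leq n^c \cdot 2^{-n}$ pointwise) such that no circuit of size $s_0(n) := s(n) \cdot \poly(\eps_0, \delta)$ predicts $L$ on $\mu_n$ with advantage exceeding $\eps_0$. The second step is to apply Impagliazzo's hardcore-based proof of the XOR lemma to conclude that $L'$ is $(1/2 + (1-\delta)^k + k\eps_0)$-hard for circuits of size roughly $s_0(n)/\poly(n)$. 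Intuitively: if at least one coordinate $x_i$ lands in the hardcore set then, from the predictor's viewpoint, $L(x_i)$ is nearly a fresh coin, so $L'(\bar x)$ is nearly random; a standard hybrid argument replacing $L(x_j)$ by an independent coin one coordinate at a time transfers the advantage to predicting $L$ on $\mu_n$, with only a factor-$k$ loss, while an additive $(1-\delta)^k$ accounts for the chance that none of the $x_i$ is hardcore.

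Finally I would tune parameters: pick $k = \Theta(n^c \log n)$ so that $(1-\delta)^k \le 1/n^{c'+1}$, and pick $\eps_0 = \Theta(1/(k n^{c'})) = 1/\poly(n)$ so that $k\eps_0 = O(1/n^{c'})$. With these choices, $L'$ is $(1/2 + O(1/n^{c'}))$-hard for circuits of size $s(n)\cdot\poly(\eps_0,\delta)/\poly(n) = s(n)/n^{O(1)}$, as claimed.

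The main obstacle I expect is the standard quantitative bookkeeping in the hybrid argument: one must verify that the factor-$k$ loss in advantage and the $\poly(n)$ loss in circuit size compose correctly with the hardcore lemma's own polynomial loss in size (as a function of $\eps_0$ and $\delta$), and that, after all losses, both the final advantage and the final size remain polynomial in $n$. The usual subtlety here is that, in order to turn a distinguisher between adjacent hybrids into a predictor for $L$ on $\mu_n$, one must nonuniformly fix the remaining $x_j$ and their $L$-values as advice, which is legitimate since we work with nonuniform circuits. Impagliazzo's original paper carries out exactly this composition, so invoking the two lemmas as black boxes with the parameter choices above is enough.
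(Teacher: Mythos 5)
The paper provides no proof of this theorem; it is cited directly from Impagliazzo~\cite{Imp} (``essentially Thm.~2'' there), so there is no in-paper argument to compare against. Your reconstruction---defining $L'$ as the $k$-fold XOR of $L$, invoking the hardcore lemma to obtain a $\delta$-dense distribution ($\delta = 1/n^c$) on which $L$ is $(1/2+\eps_0)$-hard for circuits of size $s(n)/\poly(n)$, and running a hybrid argument to transfer any advantage in predicting $L'$ into an advantage in predicting $L$ on the hardcore---is exactly Impagliazzo's own route, and the parameter choices $k=\Theta(n^c\log n)$, $\eps_0=1/\poly(n)$ give the claimed $(1/2+O(1/n^{c'}))$-hardness and $s(n)/n^{O(1)}$ size bound. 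The truth-table reduction (query all $k$ coordinates, XOR) is also as you say.

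One point worth flagging, inherited from the cited result rather than a gap in your argument: your $L'$ lives on inputs of length $kn=\Theta(n^{c+1}\log n)$, so the ``$n$'' in the conclusion refers to the input length of the \emph{original} $L$, a polynomial root of $L'$'s own input length. This reparameterization is harmless for the $(1/2+O(1/n^{c'}))$ part (any polynomial loss can be absorbed by choosing $c'$ larger), but it does matter for the size bound when the theorem is subsequently chained---as the paper does via Lemma~\ref{lem:mild2} into Theorem~\ref{thm:iwmain}, which is parameterized by the input length of the language it is fed. Anyone applying this theorem should keep track of which ``$n$'' is meant; your proof correctly delivers the bound in terms of $L$'s input length, which is what the construction actually supports.
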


\begin{mylem}\label{lem:mild2} Suppose that there exists a language $L \in \NP \cap \coNP$ and $\gamma, c > 0$ such that $L$ is $(1 - \frac{1}{n^c})$-hard for size $2^{\gamma n}$.  Then there exists another language $L' \in \NP \cap \coNP$ and a $\gamma' > 0$ such that $L'$ is $2/3$-hard for size $2^{\gamma' n}$ (for sufficiently large $n$).

\end{mylem}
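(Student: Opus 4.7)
The plan is to combine Theorem~\ref{thm:mild1} with the closure of $\NP \cap \coNP$ under polynomial-time truth-table reductions. First, I would apply Theorem~\ref{thm:mild1} to $L$, with hardness function $s(n) = 2^{\gamma n}$ and the amplification parameter $c' := 1$. This yields a language $L'$ that is $(1/2 + O(1/n))$-hard for circuits of size $2^{\gamma n}/n^{O(1)}$. For any fixed $\gamma' \in (0, \gamma)$ and all sufficiently large $n$, the bound $2^{\gamma n}/n^{O(1)} \geq 2^{\gamma' n}$ holds; and for large $n$ we also have $1/2 + O(1/n) < 2/3$. Consequently $L'$ is $2/3$-hard for size $2^{\gamma' n}$, which is the required hardness bound.

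The second step is to verify that $L' \in \NP \cap \coNP$. Theorem~\ref{thm:mild1} guarantees that $L'$ is polynomial-time truth-table reducible to $L$, so it suffices to show that $\NP \cap \coNP$ is closed under such reductions (in fact under all polynomial-time Turing reductions). Concretely, let $R$ denote the truth-table reduction, producing queries $y_1, \ldots, y_k$ and a decision rule $\rho$ so that $L'(x) = \rho(L(y_1), \ldots, L(y_k))$. An $\NP$-machine for $L'$ nondeterministically guesses an answer vector $(b_1, \ldots, b_k) \in \{0,1\}^k$; for each $i$ with $b_i = 1$ it guesses an $\NP$-certificate that $L(y_i) = 1$ using the $\NP$-algorithm for $L$, and for each $i$ with $b_i = 0$ it guesses a $\coNP$-certificate that $L(y_i) = 0$ using the $\coNP$-algorithm for $L$; the machine verifies all guessed certificates and accepts iff $\rho(b_1, \ldots, b_k) = 1$. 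The symmetric argument with $\rho$ replaced by its negation places $L'$ in $\coNP$ as well.

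No serious obstacle arises here: the proof is essentially a black-box invocation of Theorem~\ref{thm:mild1}, and the only subtlety is the preservation of $\NP \cap \coNP$ under truth-table reductions, which is folklore and handled as above. The parameter $c'$ in Theorem~\ref{thm:mild1} may be chosen as any positive constant, since we only need the mild inequality $1/2 + O(1/n^{c'}) < 2/3$ at large $n$; the value of $c$ from the hypothesis is thus irrelevant once we know it is positive.
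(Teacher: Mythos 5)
Your proof is correct and takes essentially the same approach as the paper: invoke Theorem~\ref{thm:mild1} with $s(n) = 2^{\gamma n}$, pick $\gamma' < \gamma$ so the size bound $2^{\gamma n}/n^{O(1)} \geq 2^{\gamma' n}$ and the hardness bound $1/2 + O(1/n^{c'}) < 2/3$ both hold for large $n$, and then appeal to closure of $\NP \cap \coNP$ under polynomial-time reductions. The paper states this closure fact tersely as $\Polytime^{\NP\cap\coNP} = \NP\cap\coNP$; you spell out the folklore guess-and-verify argument for the truth-table case, which is a harmless (and perhaps clarifying) elaboration.
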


\begin{proof} Apply Theorem~\ref{thm:mild1}, with $s(n) := 2^{\gamma n}$ and with any $c' > 0$ and $\gamma' \in (0, \gamma)$, and use the fact that $\NP \cap \coNP$ is closed under polynomial-time reducibilities, i.e., $\Polytime^{\NP \cap \coNP} = \NP \cap \coNP$.
\end{proof}


Next, the Impagliazzo-Wigderson pseudorandom generator~\cite{IW} allows us to amplify `moderate' hardness of the type produced by Lemma \ref{lem:mild2} into `extreme' hardness, albeit of a function problem rather than a decision problem (in~\cite{IW} additional techniques are used to produce extremely hard decision problems, but we do not follow this path).  The next definition  follows~\cite{IW} (and previous works).  Given a language $L$, an integer $c \geq 1$, a parameter $k = k(n)$, and a function $G(r): \{0, 1\}^{cn} \rightarrow \{0, 1\}^{k \times n}$ (called a `generator' function), define $L^k \circ G: \{0, 1\}^{cn} \rightarrow \{0, 1\}^{k}$ by 
$$
(L^k \circ G) (r)  :=  \left( L(G_1(r)), L(G_2(r)), \ldots , L(G_k(r))    \right),
$$
where the string $G(r)$ is divided into $k$ blocks $G_1(r), \ldots, G_k(r)$, each of length $n$.  The basic idea is that if $G(r)$ is appropriately `pseudorandom', then the collection $G_1(r), \ldots, G_k(r)$ should behave in important respects like a truly independent collection of random strings.  In particular, if it is somewhat hard to compute $L(x)$ for a random $x$, it should be very hard to compute $(L^k \circ G) (r)$ correctly when $k$ is large.

The following result (a restatement of~\cite[Thm. 2.12]{IW}) gives the main hardness-amplification property of the generator defined in that paper, which we denote $G_{IW}$.  

\begin{mythm} \label{thm:iwmain}~\cite{IW} For any $\gamma > 0$, there are $\gamma', c > 0$, and a polynomial-time computable $G_{IW}: \{0, 1\}^{cn} \rightarrow \{0, 1\}^{n \times n}$, such that: if $L$ is $2/3$-hard for size $2^{\gamma n}$, then $(L^n \circ G_{IW}) (r)$ is \\$2^{-\gamma' n}$-hard for size $2^{\gamma' n}$.
\end{mythm}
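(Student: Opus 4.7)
The plan is to cite~\cite{IW} directly: the stated claim is essentially a paraphrase of their Theorem~2.12 applied to the specific generator construction, so in the final writeup this would be invoked as a black box.  Nonetheless, let me sketch the chain of reductions the proof follows, since the parameters matter when we later feed the output of $G_{IW}$ into Lemma~\ref{lem:searchcsp}.

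The starting point is that $L$ is $2/3$-hard for size $2^{\gamma n}$.  First I would apply Impagliazzo's hardcore-set theorem to extract a set $H \subseteq \{0,1\}^n$ of density $\Omega(1)$ on which $L$ looks essentially unbiased to every circuit of size $2^{\gamma n}/\poly(n)$.  Composing this with a direct-product (Yao XOR-style) argument for $k = \Theta(n)$ independent inputs yields a quantitative direct-product statement: every circuit of size $2^{\gamma'' n}$ has probability at most $2^{-\gamma'' n}$ of correctly outputting the vector $(L(x_1),\ldots,L(x_n))$ on truly independent uniform $x_i \in \{0,1\}^n$, where $\gamma'' > 0$ depends only on $\gamma$.

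Second, I would replace the truly random inputs by the $n$ output blocks of the Impagliazzo--Wigderson generator $G_{IW}(r)$, which produces $n$ correlated strings from a seed of length $cn$ using Nisan--Wigderson combinatorial designs layered on top of walks on constant-degree expander graphs.  A standard hybrid/reconstruction argument shows that any distinguisher for the distribution of the $n$-tuple $G_{IW}(r)$ versus $n$ truly independent strings yields a small-circuit predictor for $L$.  Combined with the direct-product hardness from the previous step, this transfers the $2^{-\gamma'' n}$-unpredictability from the truly-random setting to the pseudorandom one, at the cost of shrinking the circuit-size bound by a $\poly(n)$ factor.  Choosing $\gamma' > 0$ sufficiently smaller than $\gamma''$ absorbs the loss and yields the statement of the theorem.

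The main obstacle in writing the proof from scratch would be the parameter bookkeeping across these three steps (hardcore extraction, direct-product amplification, and the pseudorandom-generator hybrid), each of which must preserve all but a $\poly(n)$ factor in the circuit-size bound while keeping the success probability exponentially small.  Since~\cite{IW} carries out this accounting in detail, the cleanest path is to invoke their Theorem~2.12 as stated and defer the internal bookkeeping to their paper.
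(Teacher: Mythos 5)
Your top-level plan---invoke \cite[Thm.~2.12]{IW} as a black box---is exactly what the paper does; the statement is imported, not reproved, so citing it and moving on is the right call.

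One caveat, though, since you also offer an internal sketch: the sketch contains a conceptual error that would matter if you ever tried to expand it. You describe step~2 as ``a standard hybrid/reconstruction argument shows that any distinguisher for the distribution of the $n$-tuple $G_{IW}(r)$ versus $n$ truly independent strings yields a small-circuit predictor for $L$.'' But the $n$ output blocks of $G_{IW}$ are generated from only $cn$ seed bits, while $n$ independent uniform blocks would carry $n^2$ bits of entropy; the two distributions are information-theoretically far apart, so no argument can establish that kind of indistinguishability, and in any case a plain hybrid would only give you a $1/\poly$ advantage loss per hybrid, nowhere near exponential hardness. The actual content of \cite{IW} is a \emph{derandomized} direct-product theorem: they do not first prove the direct-product statement for independent inputs and then transfer it via a PRG hybrid. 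Instead, they argue directly that the task of computing $(L(G_1(r)),\ldots,L(G_n(r)))$ remains $2^{-\Omega(n)}$-hard over the structured (expander walk $+$ NW design) seed distribution, via a reconstruction argument tailored to that structure. If you need to rederive the parameters at some point, follow the proof of Theorem~2.12 in \cite{IW} itself (or the later uniform treatment of \cite{IJKW}) rather than the hybrid outline here.
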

(Recall our definition of average-case hardness for general functions from Section~\ref{basicsec}.)


\begin{proof}[Proof of Theorem~\ref{thm:avghard1}] We begin by applying Lemma~\ref{lem:mild2} to our language $L \in \NP \cap \coNP$, yielding a language $L' \in \NP \cap \coNP$ that is 2/3-hard for circuits of size $2^{\gamma_0 n}$ for some $\gamma_0 > 0$.  Then we apply Theorem~\ref{thm:iwmain} to $L'$; we derive a $\gamma' > 0$, such that $((L')^n \circ G_{IW}) (r)$ is  $2^{-\gamma' n}$-hard for size $2^{\gamma' n}$.  

Since $L' \in \NP \cap \coNP$, there exists a polynomial-time witness predicate $M(x, w)$, producing outputs from $\{0, 1, ?\}$, satisfying:
\begin{enumerate}
\item for all $(x, w), M(x, w) \in \{L'(x), ?\}$;
\item for all $x$, there exists a $w$ such that $M(x, w) = L'(x)$;
\item $|w| = O(\poly(n))$.
\end{enumerate}
Let $t(n) = |w|$.  We reformat $M$ if necessary to ensure that the first bit of $w$ consists of a `claim' bit, call it $w_{cl}$, such that for any $(x, w)$ with $M(x, w) = L'(x)$, we have $w_{cl} = L'(x)$.  Next we define $M'(x, w)$, which outputs 1 if $M(x, w) \in \{0, 1\}$, 0 otherwise.  $M'$ is also polynomial-time computable.

Define a predicate $Q(r, w_1, \ldots, w_n): \{0, 1\}^{cn} \times \{0, 1\}^{n \times t(n)} \rightarrow \{0, 1\}$ as follows: $Q(r, w_1, \ldots, w_n) = 1$ iff for all $i \in [n]$, $M'(G_{IW, i}(r), w_i) = 1$.  $Q$ is polynomial-time computable since $G_{IW}$ and $M'$ are, so let $Q_n$ be a $O(\poly(n))$-sized circuit for $Q$ on input parameter $n$.  Clearly $Q_n$ is efficiently constructible.

We claim that $Q$ defines a hard-on-average search problem. To see this, suppose $C(r): \{0, 1\}^{cn} \rightarrow \{0, 1\}^{n \times t(n)}$ is any circuit of size at most $2^{\gamma' n}$ which has some $p(n)$ probability over $r$ of outputting a collection $w_1, \ldots, w_n$ for which $Q(r, w_1, \ldots, w_n) = 1$.  Then we may construct a circuit $C'(r) \{0, 1\}^{cn} \rightarrow \{0, 1\}^n$ that simply restricts the output of $C(r)$ to the `claim' bits of the strings $w_1, \ldots, w_n$ that $C$ produces.  Observe that $C'(r)$ has a $p(n)$ chance (over $r$) of correctly outputting $((L')^n \circ G_{IW}) (r)$.  Moreover, $C'(r)$ also has size bounded by $2^{\gamma' n}$.  We conclude $p(n) \leq 2^{-\gamma' n}$.

We invoke Lemma~\ref{lem:searchcsp} with $\eps := \gamma'/(2c)$, yielding a poly-time algorithm $A$ (and an associated $\nu > 0$).  We apply this $A$ to $Q_n$, yielding a 2-CSP $\psi_n(r, z)$ (here $|z| = d(n) = O(\poly(n))$).
We claim that the 2-CSP family $\{\psi_n(r, z)\}_{n > 0}$ satisfies the conditions of Theorem~\ref{thm:avghard1}.  

To see this, first note that for all $r$, $M'(r, \cdot)$ is satisfiable; so, there exists $w_1, \ldots, w_n$ such that $Q_n(r, w_1, \ldots, w_n) = 1$.  Thus by condition (\ref{lem:searchcsp}.i) of Lemma~\ref{lem:searchcsp}, there exists $z$ such that $\Val_{\psi_n}(r, z) = 1$.  So condition (\ref{thm:avghard1}.i) is satisfied.

To establish condition (\ref{thm:avghard1}.ii), let $\gamma_2 := \eps$ and $\theta := \nu$.  Suppose $C(r'): \{0, 1\}^{cn} \rightarrow \{0, 1\}^{|w'|}$ is a circuit of size at most $2^{\gamma_2 n}$, such that with some probability $q(n)$, $\Val_{\psi_n}(r', C(r')) > 1 - \theta$.  By condition (\ref{lem:searchcsp}.ii) of Lemma~\ref{lem:searchcsp}, there exists a circuit $\tilde{C}(r): \{0, 1\}^{cn} \rightarrow \{0, 1\}^{n \times t(n)}$, such that with probability at least $q(n) \cdot 2^{-\eps (cn)}$ over $r$, $Q_n(r, \tilde{C}(r)) = 1$.  Moreover, $\tilde{C}$ is of size at most $|C| + O(\poly(n))$, which for large enough $n$ is less than $2^{\gamma' n}$.  By our previous analysis we find that $q(n) \cdot 2^{-\eps (cn)} \leq  2^{-\gamma' n}$, i.e., $q(n) \leq 2^{(\eps c - \gamma')n} = 2^{-\gamma'n/2}  = \exp\{-\Omega(n)\}$.  We have proved condition (\ref{thm:avghard1}.ii), for our settings of $\gamma_2, \theta$.
\end{proof}

Next we turn to prove Theorem~\ref{thm:avghard2}.  For this, we need a more detailed analysis of the generator $G_{IW}$.  Besides the hardness-amplification property summarized in Theorem~\ref{thm:iwmain}, $G_{IW}$ has another useful property: with very high probability over $r$, the fraction of the strings $G_{IW, 1}(r), \ldots , G_{IW, n}(r)$ which are in $L$ is close to $|L \cap \{0, 1\}^n|/2^n$, that is, close to the fraction we'd expect if these strings were drawn independently and uniformly.  To prove this fact (not proved or used in~\cite{IW}), we first describe the generator in more detail.

The input $r$ to $G_{IW}$ consists of two parts, $r = (r_a, r_b)$.  $G_{IW}(r_a, r_b)$ is defined blockwise for $i \in [n]$ as 
$$
G_{IW, i}(r_a, r_b) = K_i(r_a) + K'_i(r_b),
$$
with $K_i: \{0, 1\}^{|r_a|} \rightarrow \{0, 1\}^n$, $K'_i: \{0, 1\}^{|r_b|} \rightarrow \{0, 1\}^n$, and with $+$ denoting bitwise addition mod 2.
The definition of $K'_i$ is not important to us; let us describe the functions $K_i$.  The string $r_a$ defines a random walk of length $n$ (counting the starting vertex) on an explicit expander graph $\mathcal{G}_n$ with vertex set $\{0, 1\}^n$.  $\mathcal{G}_n$ is $16$-regular with normalized second eigenvalue $\lambda_n$ at most some fixed $\lambda < 1$.  $v_i = K_i(r_a)$ represents the $i$-th vertex visited in this walk. $v_1$ is a uniform element, and each subsequent step $v_{i+1}$ is a uniform choice from among the neighbors of $v_i$.  (Note that this can be achieved with $|r_a| = O(n)$ random bits as claimed.)

We will use the following powerful result, called the Strong Chernoff Bound for Expander Walks, proved by Healy~\cite{Hea}.  

\begin{mythm}\label{thm:chernexp}\cite{Hea} Let $G = (V, E)$ be a $d$-regular graph with second eigenvalue $\lambda$, let $m > 0$, and let $f_1, \ldots, f_m: V \rightarrow [0, 1]$ have expectations $\mu_1, \ldots, \mu_m$ (over a uniform choice of $v \in V$).  Taking a random walk $v_1, \ldots, v_m$ on $G$ with uniform starting-point, we have for all $\eps > 0$,
$$
\Pr \left[  \left|     \sum_{i \leq m} f_i(v_i)     - \sum_{i \leq m} \mu_i    \right|  \geq \eps m   \right]     \leq 2e^{-\frac{\eps^2 (1 - \lambda)m}{4}} .  
$$
\end{mythm}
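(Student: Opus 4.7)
The plan is a standard Chernoff-style argument adapted to Markov chains by means of a spectral analysis of the transition matrix. For a parameter $t > 0$ to be chosen at the end, I bound the upper tail via the exponential Markov inequality:
$$
\Pr\Bigl[\sum_{i\leq m} f_i(v_i) \geq \sum_i \mu_i + \eps m\Bigr] \leq e^{-t\eps m}\cdot\frac{\bE\!\left[e^{t\sum_i f_i(v_i)}\right]}{\prod_i e^{t\mu_i}},
$$
so the entire task reduces to upper-bounding the moment generating function $\Phi(t) := \bE[\exp(t\sum_i f_i(v_i))]$.

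Unrolling the distribution of the walk gives a matrix expression for $\Phi(t)$. Let $P \in \bR^{V\times V}$ be the transition matrix of the walk (symmetric and doubly stochastic because $G$ is regular), $\pi := \mathbf{1}/n$ the uniform distribution (where $n := |V|$), and $E_i := e^{tF_i}$ with $F_i := \operatorname{diag}(f_i(v))_v$. Then
$$
\Phi(t) = \mathbf{1}^T E_m P E_{m-1} P \cdots P E_1 \pi,
$$
so bounding $\Phi(t)$ amounts to bounding the operator-norm behavior of an $m$-fold product built from $P$ and the $E_i$. After a standard symmetrization by conjugation with $E_i^{1/2}$'s, this reduces to bounding a product $M_m \cdots M_1$ where each $M_i := E_i^{1/2} P E_i^{1/2}$ is symmetric and positive semidefinite.

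The heart of the argument is a per-step inequality for each $M_i$, obtained by decomposing any vector $u$ orthogonally as $u = u_\parallel + u_\perp$ with $u_\parallel$ along $\mathbf{1}/\sqrt n$. Writing $P = J + (P-J)$ with $J := \mathbf{1}\mathbf{1}^T/n$ and $\|P-J\|_2 \leq \lambda$, the parallel component picks up the moment factor
$$
\bE_v[e^{t f_i(v)}] \leq \exp(t\mu_i + t^2/8),
$$
by Hoeffding's lemma (using $f_i \in [0,1]$), while the orthogonal component is contracted by a factor of $\lambda$. Expanding $e^{tF_i/2} = I + O(t)$ to isolate first-order and quadratic effects, I would track the parallel/perpendicular mass bookkeeping through the product $M_m\cdots M_1$ and prove by induction on $m$ that
$$
\Phi(t) \leq \exp\Bigl(t\sum_i \mu_i + \tfrac{Ct^2 m}{1-\lambda}\Bigr)
$$
for an absolute constant $C$, the $1/(1-\lambda)$ factor arising from summing a geometric series generated by perpendicular excursions that are damped by $\lambda$ at every subsequent step.

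Substituting this bound into the exponential Markov inequality and optimizing by choosing $t := \eps(1-\lambda)/C'$ for a suitable $C' > 0$ yields the one-sided tail bound $\exp(-\eps^2(1-\lambda)m/4)$; the lower tail follows by applying the argument verbatim to $g_i := 1 - f_i$ (whose expectations are $1-\mu_i$), and a union bound introduces the factor of $2$ in the statement. The hard part will be executing the per-step bound tightly enough to compose $m$ times without degrading the $(1-\lambda)$ rate. Healy's simplification is to localize the expectation contribution to the $\mathbf{1}$-direction at each step and funnel all remaining error into the orthogonal subspace, where the geometric decay by $\lambda$ ensures that the cumulative quadratic-in-$t$ correction is $O(t^2 m/(1-\lambda))$ rather than $O(t^2 m)$ or $O(t^2 m/(1-\lambda)^2)$—a worse dependence that would spoil the stated rate.
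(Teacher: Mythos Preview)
The paper does not prove this theorem; it is quoted from Healy~\cite{Hea} and used as a black box. There is therefore no ``paper's own proof'' to compare against. Your sketch is a faithful outline of the standard moment-generating-function-plus-spectral-decomposition argument that Healy (and earlier Wigderson--Xiao) actually use, so it is appropriate as a proof plan, but for the purposes of this paper a citation suffices.
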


A more general claim was made earlier by Wigderson and Xiao~\cite{WX05}, but the proof contained an error, as pointed out in~\cite{WX08}.  (A valid proof of the Theorem above, with different constants, can still be extracted from~\cite{WX05}.)

For any $r \in \{0, 1\}^{cn}$, let $\sharp (r) := |\{i \in [n]: L(G_{IW, i}(r)) = 1\}|$.  Theorem~\ref{thm:chernexp} implies the following concentration bound for the generator $G_{IW}$:

\begin{mylem}\label{lem:genconc}  Let $L$ be an arbitrary language.  Let $c_n = |L \cap \{0, 1\}^n|/2^n$.  Then for any fixed $\delta > 0$,
$$
\Pr_{r}[  \left| \sharp (r)    -  c_n\cdot n     \right|  \geq \delta n    ]    \leq \exp\{-\Omega(n)\}. 
$$
\end{mylem}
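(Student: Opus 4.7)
The plan is to reduce the statement to a direct application of the Strong Chernoff Bound for Expander Walks (Theorem~\ref{thm:chernexp}) by conditioning on $r_b$ and exploiting the shift-invariance of the uniform distribution on $\{0,1\}^n$.

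First I would fix an arbitrary $r_b$ and, for each $i \in [n]$, define the indicator function $f_i^{(r_b)}: \{0,1\}^n \to \{0,1\}$ by $f_i^{(r_b)}(v) := L(v + K'_i(r_b))$. The key observation is that the map $v \mapsto v + K'_i(r_b)$ is a bijection on $\{0,1\}^n$, so the expectation $\mu_i$ of $f_i^{(r_b)}$ under a uniformly random vertex equals $|L \cap \{0,1\}^n|/2^n = c_n$, independent of both $i$ and $r_b$. Consequently $\sum_{i \leq n} \mu_i = c_n n$.

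Next I would apply Theorem~\ref{thm:chernexp} to the $16$-regular expander $\mathcal{G}_n$, whose second eigenvalue is bounded by a fixed $\lambda < 1$ uniformly in $n$, with the walk $v_1, \ldots, v_n$ produced by $K_1(r_a), \ldots, K_n(r_a)$ (this walk has uniform starting vertex and uses each step independently, exactly as the Theorem requires), and with the functions $f_1^{(r_b)}, \ldots, f_n^{(r_b)}$ just defined. Taking $\eps := \delta$ and $m := n$ gives
$$
\Pr_{r_a}\left[\left| \sum_{i \leq n} f_i^{(r_b)}(K_i(r_a)) - c_n n \right| \geq \delta n \right] \leq 2 \exp\!\left\{ -\tfrac{\delta^2 (1 - \lambda) n}{4}\right\} = \exp\{-\Omega(n)\}.
$$
By the definition of $G_{IW,i}$, the sum inside equals $\sum_{i \leq n} L(K_i(r_a) + K'_i(r_b)) = \sharp(r_a, r_b)$, so the bound holds for every fixed $r_b$. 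Averaging over $r_b$ preserves the same $\exp\{-\Omega(n)\}$ bound over the full random input $r = (r_a, r_b)$, which is the desired conclusion.

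I do not expect any serious obstacle: the only nontrivial structural ingredients are the bijectivity of the bitwise-XOR action (which makes the marginals of each $G_{IW,i}(r)$ uniform so that $\mu_i = c_n$) and the uniform-in-$n$ spectral gap of $\mathcal{G}_n$ (which lets the exponent in Theorem~\ref{thm:chernexp} remain $\Omega(n)$). The conditioning on $r_b$ is essential because it is only after this conditioning that the $n$ points $G_{IW,i}(r_a, r_b)$ become a genuine expander walk on $\mathcal{G}_n$ to which Theorem~\ref{thm:chernexp} can be applied.
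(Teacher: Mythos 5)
Your proposal is correct and matches the paper's proof essentially step-for-step: both condition on $r_b$, define $f_i$ as the indicator of the shifted set $L_n + K'_i(r_b)$ (so $\mu_i = c_n$ by bijectivity of XOR), apply Theorem~\ref{thm:chernexp} to the expander walk $K_1(r_a),\ldots,K_n(r_a)$ on $\mathcal{G}_n$, and average over $r_b$. No substantive differences.
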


\begin{proof} Recall that $r = (r_a, r_b)$.  We show that the above inequality is true after conditioning on any value of $r_b$; this will prove the Lemma.  Let $(v'_1, \ldots, v'_n) = (K'_1(r_b), \ldots, K'_n(r_b))$.  Then for $i \in [n]$, $G_{IW, i}(r) \in L$ iff $K_i(r_a) + v'_i \in L$, or equivalently $K_i(r_a) \in L_n + v'_i$ (where $L_n :=  L \cap \{0,1\}$ and $L_n + v'_i = \{x + v_i: x \in L_n\}$).  

Define $f_i: \{0, 1\}^n \rightarrow \{0, 1\}$ to be the characteristic function of $L_n + v'_i$.  Clearly $\mu_i = c_n$ for all $i$.  The result now follows by a direct application of Theorem~\ref{thm:chernexp}, using the fact that $\mathcal{G}_n$ has second eigenvalue bounded away from 1.
\end{proof}

\begin{proof}[Proof of Theorem~\ref{thm:avghard2}]   Our choice of $\eps_0$, determined later, will be no larger than $1/6$, so by Theorem~\ref{thm:iwmain}, our hypothesis implies there is a $\gamma' > 0$, a $c > 0$, and a polynomial-time $G_{IW}: \{0, 1\}^{cn} \rightarrow \{0, 1\}^{n \times n}$, such that: for any circuit $C: \{0, 1\}^n \rightarrow \{0, 1\}^n$ of size at most $2^{\gamma' n}$,
$$
\Pr_r [C(r) = (L^n \circ G_{IW}) (r)     ]       \leq 2^{-\gamma' n} .
$$
Let $M(x, w)$ be a polynomial-time verifier for $L$: $x \in L$ iff there exists $w$ such that $M(x, w) = 1$.  Let $t(n) = |w| = O(\poly(n))$.

Let $\sharp(r)$ be as defined after Theorem~\ref{thm:chernexp}. If $C: \{0, 1\}^{cn} \rightarrow \{0, 1\}^{t(n) \times n}$ is a circuit producing $n$ strings $w_1, \ldots, w_n$, each of length $t(n)$, define
$$
\sharp_C (r) := |\{i \in [n]: M(G_{IW, i}(r), w_i) = 1\}|.
$$

\begin{myclm}\label{subclm} There exists $\gamma'' > 0$ such that the following holds.  If $C(r): \{0, 1\}^{cn} \rightarrow \{0, 1\}^{t(n) \times n}$ is a circuit of size at most $2^{\gamma'' n}$, then
$$
\Pr_{r}[  \sharp (r) < \sharp_C (r) + \gamma'' n      ]  < 2^{-\gamma'' n} . 
$$
\end{myclm}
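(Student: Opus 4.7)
The plan is to prove the claim by contraposition. Assume toward contradiction that for some $\gamma'' > 0$ (eventually to be chosen small) there is a circuit $C$ of size at most $2^{\gamma''n}$ such that the ``bad'' event $E := \{r : \sharp(r) - \sharp_C(r) < \gamma''n\}$ has $\Pr_r[E] \geq 2^{-\gamma''n}$. The strategy is to turn $C$ into a single circuit $\tilde C$ that computes the entire bitvector $(L^n \circ G_{IW})(r)$ with probability exceeding $2^{-\gamma'n}$, contradicting the conclusion of Theorem~\ref{thm:iwmain}.

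The first step is to construct the natural ``certification circuit'' $C'(r) \in \{0,1\}^n$ whose $i$-th output bit is $M(G_{IW, i}(r), w_i)$, where $(w_1, \ldots, w_n) = C(r)$. Because $M$ is a sound $\NP$-verifier for $L$, every $1$ output by $C'$ corresponds to an index with $L(G_{IW,i}(r)) = 1$, so $C'(r)$ never overclaims; coordinatewise $C'(r) \leq (L^n \circ G_{IW})(r)$, and they differ in exactly $\sharp(r) - \sharp_C(r)$ coordinates. Crucially, on $E$ this Hamming distance is smaller than $\gamma''n$. The size of $C'$ is $|C| + O(\poly(n))$.

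The second step is a nonuniform pigeonhole to remove the residual discrepancy. Define the ``correction vector'' $v(r) := C'(r) \oplus (L^n \circ G_{IW})(r)$; on $E$ it has Hamming weight at most $\gamma''n$, so it takes at most $V_{n, \gamma''n} \leq 2^{H(\gamma'')n}$ values. By averaging, some fixed value $v^\ast$ accounts for at least a $1/V_{n,\gamma''n}$ fraction of $E$, and hard-wiring it yields $\tilde C(r) := C'(r) \oplus v^\ast$ satisfying
$$
\Pr_r\bigl[\tilde C(r) = (L^n \circ G_{IW})(r)\bigr] \;\geq\; \frac{\Pr_r[E]}{V_{n,\gamma''n}} \;\geq\; 2^{-(\gamma'' + H(\gamma''))n}.
$$

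Finally I choose $\gamma'' > 0$ small enough that (a) $\gamma'' + H(\gamma'') < \gamma'$, which is possible since $H(\gamma'') \to 0$ as $\gamma'' \to 0$, and (b) $2^{\gamma'' n} + O(\poly(n)) < 2^{\gamma' n}$ for all large $n$; this forces the size of $\tilde C$ below the hardness threshold. Then $\tilde C$ violates the hardness guarantee on $(L^n \circ G_{IW})$ provided by Theorem~\ref{thm:iwmain}, yielding the desired contradiction. The only real delicacy is balancing the two exponential losses---the $2^{-\gamma''n}$ from the assumed failure probability and the $2^{-H(\gamma'')n}$ from fixing the correction vector---against the $2^{-\gamma'n}$ hardness of the Impagliazzo--Wigderson generator; this balance is exactly what dictates $\gamma''$ as a sufficiently small positive function of $\gamma'$.
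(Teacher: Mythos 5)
Your proof is correct and takes essentially the same approach as the paper: build the certification circuit from the verifier outputs, note it never overclaims so the error is a low-Hamming-weight correction, nonuniformly fix the best correction at a cost of $V_{n,\gamma''n} \leq 2^{H(\gamma'')n}$, and take $\gamma''$ small enough to contradict Theorem~\ref{thm:iwmain}. The only (cosmetic) difference is that the paper phrases the nonuniform fixing via an intermediate randomized procedure that unions a random small set $J$ into the accepted index set, whereas you average directly over the XOR correction vector; the two are equivalent.
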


\begin{proof}[Proof (of Claim~\ref{subclm})] Let $\alpha > 0$.  Say $C(r): \{0, 1\}^{cn} \rightarrow \{0, 1\}^{t(n) \times n}$ is a circuit of size at most $2^{\alpha n}$, such that $\Pr_{r}[  \sharp (r) < \sharp_C (r) + \alpha n      ]  \geq 2^{-\alpha n}$.  Consider the following randomized procedure that attempts to compute $(L^n \circ G_{IW}) (r)$:
\begin{itemize}
\item Let $C(r) = (w_1, \ldots, w_n) \in \{0, 1\}^{t(n) \times n}$.  Let $I \subseteq [n]$ be the indices $i$ for which $M(G_{IW, i}(r), w_i) = 1$.  Pick a random subset $J$ of $[n]$, uniformly from the set of all subsets of size less than $\alpha n$ (including the empty set). Output the characteristic vector of $I \cup J$.
\end{itemize}
We analyze this procedure.  Suppose $r$ is any input for which $\sharp (r) < \sharp_C (r) + \alpha n$.  Note that by definition of $M$, we always have $I \subseteq \{i \in [n]: L(G_{IW, i}(r)) = 1\}$.  Then there exists a $J \subseteq [n] \setminus S$, of size less than $\alpha n$, such that $I \cup J =  \{i \in [n]: L(G_{IW, i}(r)) = 1\}$.  Thus conditioned on this event, the procedure succeeds with probability at least $\frac{1}{V_{n, \alpha n}} \geq 2^{-H(\alpha) n}$.  So the overall success probability is at least $2^{-\alpha n}\cdot 2^{-H(\alpha) n} = 2^{-(\alpha + H(\alpha)) n}$.  

Let us nonuniformly fix a setting $J$ that maximizes the procedure's success probability, and use this choice to run the procedure.  The result is a (nonuniform) circuit of size $2^{\alpha n} + O(\poly(n))$, with success probability $\geq 2^{-(\alpha + H(\alpha)) n}$.  For $\alpha$ sufficiently small this contradicts the hardness of $(L^n \circ G_{IW})$, proving the claim.
\end{proof}

Now we set $\eps_0 := \min (1/6, \gamma''/4)$.  Fix any circuit $C: \{0, 1\}^{cn} \rightarrow \{0, 1\}^{t(n) \times n}$ of size at most $2^{\gamma''n}$.
We use Lemma~\ref{lem:genconc} applied to $\delta := \eps_0$, and the previous Claim, to find that, with probability $\geq 1 - \exp\{-\Omega(n)\}$ over $r$, we have the simultaneous inequalities $(c_n + \eps_0)n > \sharp (r) > (c_n - \eps_0)n$ and $\sharp (r) \geq \sharp_C (r) + \gamma'' n$.  Call a string $r$ with this property \emph{$C$-typical}.

What is $c_n$?  We claim it must lie in $[1/2 - \eps_0, 1/2 + \eps_0]$.  For otherwise, a size-1 circuit could guess $L(x)$ with probability greater than $1/2 + \eps_0$ by guessing the majority value on length $n$, contrary to our hardness assumption about $L$.  Thus for a $C$-typical $r$, $\sharp_C (r) \leq (1/2 + \eps_0)n - \gamma''n < (1/2 - 3\gamma''/4)n$ and also $\sharp(r) \geq (1/2 - \eps_0)n - \eps_0 n \geq (1/2 - \gamma''/2)n$.  

Defining $\eta$ as some rational number in the interval $(1/2 - 3\gamma''/4, 1/2 - \gamma''/2)$, define a predicate $Q(r, w_1, \ldots, w_n): \{0, 1\}^{cn} \times \{0, 1\}^{n \times t(n)} \rightarrow \{0, 1\}$ as follows: $Q(r, w_1, \ldots, w_n) = 1$ iff for at least an $\eta $ fraction of indices $i$ we have $M(G_{IW, i}(r), w_i) = 1$.  $Q$ is itself polynomial-time computable, computed by some uniform family $\{Q_{n}\}_{n > 0}$ of poly-size circuits.  We have the key property that for a $C$-typical $r$, there exist $w_1, \ldots, w_n$ such that $Q(r, w_1, \ldots, w_n) = 1$, yet $Q(r, C(r)) = 0$.

Invoke Lemma~\ref{lem:searchcsp} with $\eps := \gamma''/(2c)$, yielding an algorithm $A$ (and an associated $\nu > 0$).  Then we claim $\{A(Q_n)\}_{n > 0} = \{\psi_n(r, z)\}_{n > 0}$ is the desired family of 2-CSPs (here $|r| = cn,  |z| = d(n) = O(\poly(n))$).    First we verify condition (\ref{thm:avghard2}.i).  Consider any $r$ for which there exists a $w_1, \ldots, w_n$ such that $Q_n(r, w_1, \ldots, w_n) =1$.   By condition (\ref{lem:searchcsp}.i) of Lemma \ref{lem:searchcsp}, we find that in this case there exists $z$ such that $\Val_{\psi_n}(r, z) = 1$.  Since all but an $\exp\{-\Omega(n)\}$ fraction of $r$ have this property, condition (\ref{thm:avghard2}.i) is satisfied.

To establish condition (\ref{thm:avghard2}.ii), fix $\gamma_2$ as any value in $(0, \gamma'')$ and let $\theta := \nu$.  Suppose $C(r): \{0, 1\}^{cn} \rightarrow \{0, 1\}^{d(n)}$ is a circuit of size at most $2^{\gamma_2 n}$, such that with some probability $q(n)$, $\Val_{\psi_n}(r, C(r)) > 1 - \theta$.  By condition (\ref{lem:searchcsp}.ii) of Lemma \ref{lem:searchcsp}, there exists a circuit $\tilde{C}(r): \{0, 1\}^{cn} \rightarrow \{0, 1\}^{n \times t(n)}$, such that with probability at least $q(n) \cdot 2^{-\eps (cn)}$ over $r$, $Q_n(r, \tilde{C}(r)) = 1$.  Note that such an $r$ fails to be $\tilde{C}$-typical.  Moreover, $\tilde{C}$ is of size at most $|C| + O(\poly(n))$, which for large enough $n$ is less than $2^{\gamma'' n}$.  

So, by our previous analysis we find that $q(n) \cdot 2^{-\eps cn} \leq  2^{-\gamma'' n}$, i.e., $q(n) \leq 2^{(\eps c - \gamma'')n} = 2^{-\gamma''n/2}  = \exp\{-\Omega(n)\}$.  We have proved condition (\ref{thm:avghard2}.ii).  This completes the proof of Theorem~\ref{thm:avghard2}.
\end{proof}

Finally, we note that versions of Theorems~\ref{thm:avghard1} and~\ref{thm:avghard2} can be proved, in which both the hypotheses and conclusions apply, not to general circuits, but to the class of $\TCzero$ circuits (i.e., constant-depth Boolean circuits with majority gates), or any circuit class containing $\TCzero$.  This is because all the reductions involved can be carried out in $\TCzero$.  (For a discussion of why the Impagliazzo and Impagliazzo-Wigderson constructions amplify hardness in $\TCzero$, see Agrawal~\cite{Agr}; the difficulties in amplifying hardness in lower classes like $\ACzero$ were explored by Shaltiel and Viola~\cite{SV}.)

\section{Questions for Further Research}

\begin{itemize}

\item Does our approximation problem remain $\prAM$-complete if each variable in the CSP $\psi(r, z)$ is restricted to appear in only a constant number of constraints?  (The `expander-replacement' technique~\cite{PapYan, Pap} allows us to restrict the occurrences of $z$-variables in our $\prAM$-completeness proof; it is the `stochastic' $r$-variables which pose a challenge.)  Alternatively, can one perhaps show that under this restriction the problem lies in $\NP$?  


\item Can we unconditionally disprove Hypothesis B?  Given the sharp limitations of $\NCzero$ circuits this might be possible.

\item Can PCP ideas be used to give new upper bounds on the class $\AM$?

\item Find more applications of PCPPs in complexity theory.

\end{itemize}

\section{Acknowledgements}

I thank Scott Aaronson, Madhu Sudan, and some anonymous referees for helpful comments.

\bibliographystyle{alpha}
\newcommand{\etalchar}[1]{$^{#1}$}

\end{document}